\newcommand{\be}{\begin{eqnarray}}
\newcommand{\ee}{\end{eqnarray}}
\newcommand{\bn}{\begin{eqnarray*}}       
\newcommand{\en}{\end{eqnarray*}}
\newcommand{\bea}{\begin{eqnarray*}}
\newcommand{\eea}{\end{eqnarray*}}
\newcommand{\ben}{\begin{eqnarray}}
\newcommand{\een}{\end{eqnarray}}
\newcommand{\beq}{\begin{equation}}
\newcommand{\eeq}{\end{equation}}
\newtheorem{theorem}{Theorem}
\newtheorem{lemma}{Lemma}
\newtheorem{proposition}{Proposition}
\newcommand{\C}{\ensuremath{\mathbb{C}}}
\newcommand{\R}{\ensuremath{\mathbb{R}}}
\newcommand{\la}{\langle}
\newcommand{\ra}{\rangle}
\newcommand{\D}{\mathrm{d}}
\newcommand{\LL}{\mathcal{L}}
\newcommand{\eu}{\mathrm{e}}
\newcommand{\eps}{\varepsilon}
\DeclareMathOperator*{\supp}{supp}
\begin{document}

\title{On the existence of bound states in asymmetric leaky wires}

\begin{abstract}

\end{abstract}

\author{Pavel Exner}%
\email{exner@ujf.cas.cz}
\affiliation{Department of Theoretical Physics, Nuclear Physics Institute, Czech Academy of Sciences, CZ-25068 \v{R}e\v{z} near Prague}
\affiliation{Doppler Institute, Faculty of Nuclear Sciences and Physical Engineering, Czech Technical University, B\v{r}ehov\'{a} 7, CZ-11519 Prague~1}%
\author{Semjon Vugalter}%
\email{semjon.wugalter@kit.edu}
\affiliation{Institute for Analysis, Karlsruhe Institute for Technology (KIT),\\
Kaiserstrasse 89, D-76133 Karlsruhe}%


\begin{abstract}
\noindent We analyze spectral properties of a leaky wire model with a potential bias. It describes a two-dimensional quantum particle exposed to a potential consisting of two parts. One is an attractive $\delta$-interaction supported by a non-straight, piecewise smooth curve $\LL$ dividing the plane into two regions of which one, the `interior', is convex. The other interaction component is a constant positive potential $V_0$ in one of the regions. We show that in the critical case, $V_0=\alpha^2$, the discrete spectrum is non-void if and only if the bias is supported in the interior. We also analyze the non-critical situations, in particular, we show that in the subcritical case, $V_0<\alpha^2$, the system may have any finite number of bound states provided the angle between the asymptotes of $\LL$ is small enough.
\end{abstract}

\maketitle

\section{Introduction}

\noindent There are various ways in which quantum particles can be guided and their analysis is without any doubt important both theoretically and practically. One of the deep results here is the connection between the geometry of the guiding and the existence of localized states. This became an object of intense interest since the end of the eighties\cite{ES89, GJ92, DE95, RB95}, even if in a particular case the effect was noted more than two decades earlier\cite{He65}. While the paradigmatic result refers to smoothly bent channels with hard, i.e. Dirichlet walls, the effect is surprisingly robust being present for Robin, i.e. mixed type boundaries\cite{BMT12} as well as for sharply broken Dirichlet channels\cite{He65,CLM92}. What is more, a similar behavior was observed in systems governed by other equations, Maxwell or fluid dynamics\cite{ELV94}; in the former case it was confirmed by a simple experiment\cite{CLM92}.

The binding due to bends persists even if the confinement to the guiding channel is `softer' being realized by a protracted potential `ditch'. A simple model in which the guiding potential is of the $\delta$-type, being supported by a curve, is usually referred to as a \emph{leaky quantum wire}\cite{Ex08}. The name is derived from the use of such operators to describe systems of semiconductor wires, the `leakiness' means that in contrast to the usual quantum graph models\cite{BK} they do not exclude quantum tunneling between different parts of such a wire. This makes them a more realistic model of actual nanowires the boundaries are which are potential jumps, often high but finite. A bent leaky wire also exhibits bound states\cite{EI01} the number and binding energies of which depend on its geometry.

The mechanism producing those bound states, as first indicated in Ref.~\onlinecite{GJ92}, is related to the singularity at the bottom of the spectrum of a straight channel which may produce an isolated eigenvalue pole as a result of a geometric deformation. An important element in the universality of the result, however, is the symmetry of the unperturbed guide. If it is violated the picture changes, as first observed by Dittrich and K\v{r}\'{\i}\v{z}\cite{DK02} who noted that in a planar guide whose one boundary is Dirichlet and the other Neumann the existence of bound states depends on which way we bend it.

Our aim in the present paper is to investigate a similar problem for leaky wires with a potential bias, that is, to perform a spectral analysis of operators of the type
\beq \label{formalHam}
H = -\Delta + V(x) - \alpha \delta(x-\LL)\,, \quad \alpha>0\,,
\eeq
in $L^2(\R^2)$, where the $\delta$-potential, sometimes also written as $-\alpha \delta(\cdot-\LL)\la \cdot,\delta(\cdot-\LL) \ra$, is supported by an infinite, piecewise smooth curve $\LL$ dividing the plane into two regions. We will be interested in the situation where the potential is constant, positive, and supported in one of those regions. We are going to demonstrate that, similarly to the mentioned result of Dittrich and K\v{r}\'{\i}\v{z} and analogous effects in other asymmetric guides --- cf. Ref.~\onlinecite{BMT12} and references therein --- the existence of the bound states depends of the way in which $\LL$ is bent with respect to the bias.

We have to add a caveat, however, concerning various analogies one could think of. Let us recall that an orientation-dependent binding may be observed also in `one-sided' problems, an example being represented by the Laplacian in an infinite planar region with an `attractive' Robin boundary\cite{EM14}. While similar at a glance, these system behave in fact in the opposite way. The waveguide of Ref.~\onlinecite{DK02} as well as our leaky wire Hamiltonian \eqref{formalHam} exhibit bound states if the Dirichlet boundary or the positive-potential region are bent \emph{outward}, while in the one-sided situation an outward bend of the Robin boundary is exactly the situation when the system does \emph{not} have bound states\cite{EM14}. A heuristic way to understand the difference is to observe the transverse behavior of the threshold-energy solution in the case of a straight boundary. In the first case it is tilted away from the Dirichlet boundary, for the operator \eqref{formalHam} it is constant in the potential-free region and decays exponentially in its counterpart. In contrast, in Robin domains the solution is localized in the vicinity of the boundary.

Let us describe briefly the contents of the paper. As a preliminary, we analyze in the next section the transverse part of the operator  \eqref{formalHam} in the situation when $\LL$ is a straight line. Next, in Sec.~\ref{s:Ham}, we introduce the object of our interest properly, list the assumptions and identify the essential spectrum. The main results are presented in Sec.~\ref{s:ext}--\ref{s:finite}. The first two of them deal respectively with situations where the potential bias is supported in the exterior (concave) and the interior (convex) one of the two regions to which the curve $\LL$ divides the plane. In both cases we start from a discussion of the example in which $\LL$ is a broken line and subsequently generalize the results to a more general class of curves. Our primary attention is paid to the critical case, $V_0=\alpha^2$, in which we demonstrate  that the discrete spectrum is non-void if and only if the bias is supported in the interior. We also discuss the existence of bound states in the non-critical situations. In particular, we show that the system may have any finite number of bound states for $V_0<\alpha^2$ provided the angle between the asymptotes of $\LL$ is small enough. On the other hand, in Sec.~\ref{s:finite} we will show that the discrete spectrum of the model is always finite.

\section{A preliminary: the spectrum of a one-dimensional operator with $\delta$-interaction and a potential jump}
\setcounter{equation}{0} \label{s:prelim}

\noindent If the guiding curve is a straight line the problem can be solved by separation of variables. It is useful to inspect first how the transverse part looks like in this case. Consider thus the operator
\beq
h = -\frac{\D^2}{\D x^2} -\alpha \delta (x) + V(x)\,,
\eeq
where $V(x) = V_0$ for $x > 0$ and $V(x) = 0$ otherwise. As usual with the point interactions\cite{AGHH} the $\delta$-potential enters the operator domain description through the boundary conditions matching the functions at the point $x=0$. Alternatively, one can define $h$ as the self-adjoint operator associated with the form
\beq
\phi \mapsto \|\phi'\|^2 -\alpha|\phi(0)|^2  + \la V\phi, \phi\ra
\eeq
defined on $H^1(\R)$. Elementary properties of $h$ are easy to be found.

\begin{lemma}
 \label{propChi}
  \begin{enumerate}[(i)]
      \item $\sigma_\mathrm{ess} (h) = [0, \infty )$.
			\item The operator $h$ has no eigenvalues for $V_0 \ge \alpha^2$.
			\item The operator $h$ has a unique eigenvalue $\mu = - \left( \frac{ \alpha^2 - V_0}{2\alpha}\right)^2$ for $V_0 < \alpha^2$.
			\item If $V_0 = \alpha^2$ the equation $h\psi = 0$ has a bounded weak solution $\psi \notin L^2(\R^1)$.
	\end{enumerate}
\end{lemma}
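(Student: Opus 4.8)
The plan is to reduce everything to elementary ODE analysis, exploiting that away from the origin $h$ acts as $-\D^2/\D x^2$ on $(-\infty,0)$ and as $-\D^2/\D x^2 + V_0$ on $(0,\infty)$, while the $\delta$-interaction contributes the interface conditions $\psi(0^+)=\psi(0^-)$ and $\psi'(0^+)-\psi'(0^-)=-\alpha\,\psi(0)$, which one reads off from the quadratic form by integration by parts on each half-line.

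For part~(i) I would first observe that the $\delta$-term is a rank-one perturbation in the form sense: the trace functional $\phi\mapsto\phi(0)$ is infinitesimally form-bounded relative to $\|\phi'\|^2$, so by the standard invariance of the essential spectrum under relatively form-compact perturbations it suffices to compute $\sigma_\mathrm{ess}$ of $-\D^2/\D x^2 + V_0\chi_{(0,\infty)}$. Decoupling at the origin by imposing an extra Dirichlet condition there is a finite-rank resolvent perturbation and hence preserves $\sigma_\mathrm{ess}$; the two resulting half-line operators are the free Laplacian on $(-\infty,0)$, with $\sigma_\mathrm{ess}=[0,\infty)$, and $-\D^2/\D x^2+V_0$ on $(0,\infty)$, with $\sigma_\mathrm{ess}=[V_0,\infty)$, whose union is $[0,\infty)$ since $V_0>0$. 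This is the only step that requires structural argument rather than direct computation.

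For parts~(ii) and~(iii) I would look for an eigenvalue $\mu=-\kappa^2<0$. Square-integrability forces $\psi(x)=A\,\eu^{\kappa x}$ on the left and $\psi(x)=B\,\eu^{-\lambda x}$ on the right with $\lambda=\sqrt{\kappa^2+V_0}$; continuity gives $A=B$, and the derivative-jump condition then collapses to the single transcendental equation $\kappa+\sqrt{\kappa^2+V_0}=\alpha$. Since its left-hand side is strictly increasing in $\kappa\ge0$ from $\sqrt{V_0}$ to $+\infty$, it has a root $\kappa>0$ precisely when $\sqrt{V_0}<\alpha$, i.e. $V_0<\alpha^2$, and that root is unique; solving gives $\kappa=(\alpha^2-V_0)/2\alpha$ and hence the claimed eigenvalue $\mu=-\bigl((\alpha^2-V_0)/2\alpha\bigr)^2$. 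To finish~(ii) I would separately exclude eigenvalues $\mu\ge0$ by noting that on $(-\infty,0)$ the equation admits only oscillatory solutions (for $\mu>0$) or affine ones (for $\mu=0$), none of which are square-integrable at $-\infty$.

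Part~(iv) is the degenerate case $V_0=\alpha^2$, where the above root slides to $\kappa=0$. Here I would simply exhibit the zero-energy solution $\psi\equiv1$ on $(-\infty,0]$ and $\psi(x)=\eu^{-\alpha x}$ on $(0,\infty)$: it is continuous and bounded, and a one-line check shows $\psi'(0^+)-\psi'(0^-)=-\alpha=-\alpha\,\psi(0)$, so $\psi$ is a bounded weak solution of $h\psi=0$; being equal to a nonzero constant on a half-line, it fails to lie in $L^2(\R)$. The whole argument is thus explicit once the essential-spectrum claim in~(i) is in place.
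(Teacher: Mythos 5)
Your proposal is correct and is essentially the argument the paper has in mind: the paper simply declares claims (i)--(iv) straightforward and records only the explicit critical solution $\psi(x)=1$ for $x<0$, $\psi(x)=\eu^{-\sqrt{V_0}\,x}$ for $x>0$, which is exactly the function you exhibit in part (iv) (with $\sqrt{V_0}=\alpha$). Your matching analysis via the interface conditions, the transcendental equation $\kappa+\sqrt{\kappa^2+V_0}=\alpha$, and the decoupling argument for the essential spectrum are precisely the routine computations the paper leaves to the reader.
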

\begin{proof}
The prof of claims (i)--(iv) is straightforward. The solution mentioned in (iv) is given explicitly by $\psi (x) = 1$ for $x < 0$ and $\psi (x) = \eu^{-\sqrt {V_0} x} $ for $x > 0$.
\end{proof}

The case (d) will be naturally called \emph{critical}, and consequently, we shall use the terms \emph{supercritical} and \emph{subcritical} for cases (ii) and (iii), respectively.

\begin{lemma}
 \label{propFunc}
For $V_0 > 0$ and any $\varphi \in \C^2(\R_+) \bigcap L^2(\R_+)$ we have
\beq
\label{eq1}
\int_0^\infty ({|\varphi^{\prime}}|^2 + V_0|\varphi|^2)(x)\, \D x \ge \sqrt{V_0}\, |\varphi(0)|^2.
\eeq
\end{lemma}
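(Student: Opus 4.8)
The plan is to obtain \eqref{eq1} from a completion of squares, the only nontrivial ingredient being the control of $\varphi$ at infinity. If the left-hand side of \eqref{eq1} is infinite there is nothing to prove, so I may assume it is finite; since $V_0>0$ this forces both $\varphi$ and $\varphi'$ to lie in $L^2(\R_+)$.

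The algebraic heart of the argument is the pointwise identity, valid for $\varphi\in\C^2(\R_+)$,
\beq
|\varphi'+\sqrt{V_0}\,\varphi|^2 = |\varphi'|^2 + V_0|\varphi|^2 + \sqrt{V_0}\,\frac{\D}{\D x}|\varphi|^2,
\eeq
which follows from $\varphi'\cl{\varphi}+\cl{\varphi'}\varphi=(|\varphi|^2)'$. Integrating over $(0,\infty)$ and discarding the manifestly non-negative left-hand side yields
\beq
\int_0^\infty\big(|\varphi'|^2+V_0|\varphi|^2\big)(x)\,\D x \;\ge\; \sqrt{V_0}\,\Big(|\varphi(0)|^2-\lim_{x\to\infty}|\varphi(x)|^2\Big),
\eeq
so that the assertion reduces to showing that the boundary contribution at infinity vanishes.

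That is the step I expect to require the most care. I would argue as follows. By Cauchy--Schwarz the derivative $(|\varphi|^2)'=2\Re(\cl{\varphi'}\varphi)$ is integrable on $\R_+$, being dominated by $|\varphi'|\,|\varphi|$ with both factors in $L^2(\R_+)$; hence $\lim_{x\to\infty}|\varphi(x)|^2$ exists. On the other hand $|\varphi|^2\in L^1(\R_+)$ because $\varphi\in L^2(\R_+)$, and a non-negative integrable function possessing a limit at infinity must have that limit equal to zero. Consequently the boundary term drops out and \eqref{eq1} follows. Equivalently, one may invoke the standard decay of $H^1(\R_+)$ functions at infinity.
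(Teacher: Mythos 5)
Your proof is correct, but it follows a genuinely different route from the paper's. The paper disposes of the lemma in one line by a variational argument: the Euler--Lagrange equation of the quadratic functional on the left-hand side of \eqref{eq1}, with $\varphi(0)$ held fixed, is $\varphi''=V_0\varphi$, whose unique solution in $L^2(\R_+)$ is $\varphi(x)=\varphi(0)\,\eu^{-\sqrt{V_0}\,x}$, and evaluating the functional on this profile gives exactly $\sqrt{V_0}\,|\varphi(0)|^2$. That argument is shorter but leaves implicit that the infimum over the admissible class is attained and that the critical point is a global minimum (both facts do follow from the strict convexity of the functional, but the paper does not say so). Your completion-of-squares argument needs no variational machinery at all: the pointwise identity $|\varphi'+\sqrt{V_0}\,\varphi|^2=|\varphi'|^2+V_0|\varphi|^2+\sqrt{V_0}\,(|\varphi|^2)'$ reduces the inequality to the vanishing of the boundary term at infinity, which you handle correctly --- $(|\varphi|^2)'\in L^1(\R_+)$ by Cauchy--Schwarz gives existence of $\lim_{x\to\infty}|\varphi(x)|^2$, and $|\varphi|^2\in L^1(\R_+)$ forces that limit to be zero. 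As a bonus, your method exhibits the case of equality, $\varphi'+\sqrt{V_0}\,\varphi=0$, which is precisely the exponential minimizer the paper's proof writes down; so the two arguments are consistent, with yours being the more self-contained and the paper's the more economical.
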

\begin{proof}
The Euler equation for the functional on the right-hand side of \eqref{eq1} has a unique solution, namely
$$
\varphi (x) = \varphi (0)\, \eu^{-\sqrt{V_0}x}\,.
$$
\end{proof}

\section{A two-dimensional leaky-wire Hamiltonian with a potential jump}
\setcounter{equation}{0} \label{s:Ham}

\subsection{The Hamiltonian}
\label{propL}

\noindent As indicated in the introduction, the object of our interest is the singular Schr\"odinger operator
\beq \label{Ham}
H = -\Delta + V(x) - \alpha \delta_\LL
\eeq
in $L^2(\R^2)$, where $\LL:\:\R \to \R^2$ is an infinite planar curve without self-intersections, $\delta_\LL$ is the $\delta$-potential supported by $\LL$, and $V(x) \ge 0$. We add other assumptions, namely
  \begin{enumerate}[(a)]
  \item $L$ divides $\R^2$ into two regions such that one of them is convex. The trivial case of two halfplanes is excluded.
  \item $L$ consists of a finite number of a $\C^2$ segments.
  \item The natural (arc-length) parametrization of $L$ is used in the following.
  \item Asymptotes of $L(s)$ for $s\to \pm\infty$ exist and they are not parallel. Everywhere in this paper we will assume that in the polar coordinates the asymptotes coincide with the radial halflines of angles $\varphi = \beta$ and $\varphi = -\beta .$
	\end{enumerate}
Under these assumptions the quadratic form

\beq
\psi \mapsto \|\nabla\psi\|^2 + \la\psi, V\psi\ra -\alpha \int_\R |\psi(\LL(s))|^2\, \D s
\eeq
defined on $H^1(\R^2)$ is closed and below bounded and the unique self-adjoint operator associated with it is $H\equiv H_{\alpha,\LL,V}$ of \eqref{Ham} above. With respect to the potential we focus our attention on a particular case. By hypothesis (a) above the curve splits the plane into two regions, we assume that
	\label{propV}
  \begin{enumerate}
  \item[(e)] $\,V(x)=V_0 >0$ in one of these regions and $V(x)= 0$ in the other.
	\end{enumerate}

\subsection{The essential spectrum}

\noindent As usual the essential spectrum is determined by the behavior of the potential, both its regular and singular components, at large distances. In view of the assumption (d) we expect that asymptotically a separation of variables will play role, and consequently, Lemma~\ref{propChi} could be used. Indeed, we have the following result.
\begin{theorem} \label{main1}
\emph{(Location of the essential spectrum)} Under the assumptions (a)--(e) we have $\sigma_\mathrm{ess} (H) = [\mu,\infty)\,$, where $\mu = -\frac14 \alpha ^{-2}(\alpha ^2 -V_0)^2$ for $V_0 < \alpha ^2$ and $\mu = 0$ otherwise.
\end{theorem}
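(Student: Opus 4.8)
The plan is to prove the two inclusions $[\mu,\infty)\subseteq\sigma_\mathrm{ess}(H)$ and $\sigma_\mathrm{ess}(H)\subseteq[\mu,\infty)$ separately, exploiting the fact that by assumption (d) the operator decouples asymptotically into the transverse operator $h$ of Sec.~\ref{s:prelim} and a free longitudinal motion along each asymptote. Note first that in all three regimes the bottom of the transverse spectrum equals $\mu$: in the subcritical case this is the eigenvalue of Lemma~\ref{propChi}(iii), while for $V_0\ge\alpha^2$ Lemmas~\ref{propChi}(i)--(ii) give $\inf\sigma(h)=0=\mu$.

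For the inclusion $[\mu,\infty)\subseteq\sigma_\mathrm{ess}(H)$ I would construct Weyl (singular) sequences. Far out along one of the asymptotic arms the curve is, by (b)--(d), arbitrarily close to a straight halfline, so curvilinear (tubular) coordinates $(s,u)$ --- arc length along $\LL$ and signed normal distance --- are available there and the metric corrections, governed by the curvature $\kappa(s)\to0$, are negligible. In the subcritical case I take $\psi_n(s,u)=\chi(u)\,\eu^{\mathrm{i}ks}\,\theta_n(s)$, where $\chi$ is the transverse bound state of $h$ at energy $\mu$ and $\theta_n$ is a normalized cutoff whose support slides to $s\to+\infty$ with slowly growing width; one checks $\|(H-(\mu+k^2))\psi_n\|\to0$, and choosing disjoint supports makes the $\psi_n$ an orthonormal singular sequence, so $\mu+k^2\in\sigma_\mathrm{ess}(H)$ for every $k\in\R$, i.e. the whole interval $[\mu,\infty)$. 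For $V_0\ge\alpha^2$, where $\mu=0$ and no transverse $L^2$ mode exists, the same conclusion $[0,\infty)\subseteq\sigma_\mathrm{ess}(H)$ follows more simply from plane-wave packets $\eu^{\mathrm{i}k\cdot x}\eta_n(x)$ supported far from $\LL$ in the region where $V=0$, together with the closedness of the essential spectrum.

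The reverse inclusion $\sigma_\mathrm{ess}(H)\subseteq[\mu,\infty)$ is equivalent to $\inf\sigma_\mathrm{ess}(H)\ge\mu$ and is established by Neumann bracketing. Fix $\eps>0$ and split $\R^2$ into a compact region $\Omega_0$ containing the bend, two tubular strips $\{|u|<d\}$ of a fixed finite width along the two arms (where $s\ge R$), and the remaining bulk sets lying at distance $>d$ from $\LL$, imposing Neumann conditions on all interfaces. The decoupled operator $H^N$ satisfies $H^N\le H$, so $\inf\sigma_\mathrm{ess}(H)\ge\inf\sigma_\mathrm{ess}(H^N)$. The part on $\Omega_0$ has compact resolvent and contributes no essential spectrum; the bulk parts carry the free operator $-\Delta$ or $-\Delta+V_0$ and are bounded below by $0\ge\mu$. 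On each strip the frozen-$s$ transverse form is, up to the curvature corrections made small by taking $R$ large, the form of the truncated operator $h$ on $(-d,d)$ with Neumann ends, whose bottom exceeds $\mu-\eps$ once $d$ is large --- and here Lemma~\ref{propFunc} supplies precisely the bound needed to absorb the $\delta$-term in the critical and supercritical regimes. Integrating this pointwise estimate in $s$ and discarding the nonnegative longitudinal kinetic term shows that the strip part of $H^N$ is bounded below by $\mu-\eps$, whence $\inf\sigma_\mathrm{ess}(H)\ge\mu-\eps$; letting $\eps\to0$ finishes the argument.

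The main obstacle is geometric rather than spectral: the transverse variable ranges over all of $\R$ while the tubular coordinate system is only faithful for $|u|\lesssim 1/\kappa(s)$, so a single curvilinear chart cannot cover a whole arm. This is what forces the three-fold splitting --- a curvilinear strip of bounded width glued to flat bulk pieces --- and the quantitative control of the normal-coordinate Jacobian $1-u\kappa(s)$ on the strip, which is where assumption (a) (convexity, guaranteeing the normals do not focus within the strip) and the decay of $\kappa$ inherited from (b)--(d) are really used.
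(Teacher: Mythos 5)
Your overall skeleton matches the paper's (Lemmata~\ref{HVZ1} and \ref{HVZ2}): Weyl sequences sliding out along an arm for $[\mu,\infty)\subseteq\sigma_\mathrm{ess}(H)$, and a decoupling argument for the reverse bound. Your second half is, moreover, a genuinely different technique --- Neumann bracketing into a compact piece, strips, and bulk, where the paper uses Zhislin-type IMS partitions of unity (radial, then angular) and handles the inner region by a finite-dimensional subspace rather than a compact-resolvent box; your plane-wave treatment of the critical/supercritical first inclusion is also simpler than the paper's unified construction $f(x-a_i)g(y)$. Nevertheless, there is a genuine gap in the way you set up \emph{both} halves: you work in tubular coordinates $(s,u)$ of $\LL$ on strips of fixed width $d$, and this needs $\kappa<1/d$, indeed $\sup_{s\ge R}\kappa(s)\to 0$, along the arms. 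That decay is \emph{not} ``inherited from (b)--(d)'': the assumptions give a piecewise-$C^2$ convex curve with asymptotes, which forces the \emph{slope} to converge (by convexity it is monotone and must tend to the asymptotic slope), but it permits arbitrarily tall, arbitrarily narrow curvature spikes recurring all the way to infinity; near such spikes the normal lines cross inside any strip of fixed width, so the tubular chart does not exist and the frozen-$s$ transverse estimate collapses. This is exactly why the paper never uses normal coordinates at large distances: in Lemma~\ref{HVZ2} the outer sectors are treated in Cartesian coordinates adapted to the asymptote, the curve is written as $y=l(x)$ with $l'(x)<\eps_1$, the transverse direction is the \emph{vertical} one, and the tilt of the $\delta$-term is absorbed by applying Lemma~\ref{propChi} with the enhanced coupling $\alpha\sqrt{1+\eps_1}$. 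Your bracketing argument survives if you make the same substitution, i.e. take strips around the asymptotes rather than tubular neighborhoods of $\LL$.

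A second, smaller inaccuracy concerns the strip estimate itself: on the truncated transverse interval with a Neumann end, Lemma~\ref{propFunc} does not hold as stated. Minimizing $\int_0^d(|\varphi'|^2+V_0|\varphi|^2)\,\D x$ with the natural boundary condition at $x=d$ gives the sharp constant $\sqrt{V_0}\tanh(\sqrt{V_0}d)<\sqrt{V_0}$, so in the critical case $V_0=\alpha^2$ the $\delta$-term is \emph{not} fully absorbed: the truncated Neumann operator has a strictly negative bottom (also in the subcritical case its bottom lies strictly \emph{below} $\mu$, by Neumann monotonicity). The bottom is exponentially close to $\mu$ as $d\to\infty$, which is all you need for the $\mu-\eps$ bound, but establishing this requires an additional trace/absorption step of the type $|\varphi(0)|^2\le\eps'\|\varphi'\|^2+C(\eps')\|\varphi\|^2$, not a direct citation of Lemma~\ref{propFunc}. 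With the strips re-anchored to the asymptotes and this absorption estimate supplied, your bracketing route goes through and is a legitimate alternative to the paper's partition-of-unity proof.
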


\noindent The claim of the theorem is obviously equivalent to a pair of implications expressed by the following two lemmata.

\begin {lemma} \label{HVZ1}
$\,\nu \in \sigma_\mathrm{ess}(H)$ holds for any $\nu \ge \mu $.
\end{lemma}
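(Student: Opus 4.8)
The plan is to prove the inclusion $[\mu,\infty)\subseteq\sigma_{\mathrm{ess}}(H)$ by exhibiting, for every fixed $\nu\ge\mu$, a singular Weyl sequence: a normalized $\{\psi_n\}\subset H^1(\R^2)$ with $\psi_n\rightharpoonup 0$ and $\no{(H-\nu)\psi_n}_{-1}\to 0$, the residual measured in the form (i.e.\ $H^{-1}$) norm. It is convenient to split the half-line into the part $[\,0,\infty)$, present for every value of $V_0$, and, when $V_0<\al^2$, the extra interval $[\mu,0)$ coming from the transverse bound state of Lemma~\ref{propChi}(iii). Since $\sigma_{\mathrm{ess}}(H)$ is closed, it suffices to treat $\nu$ in the open parts and recover the endpoints by a limiting argument.

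For $\nu\ge 0$ I would avoid the singular interaction altogether and work inside the region where $V\equiv 0$. By assumptions (a) and (d) this region contains an infinite sector, hence balls $B(x_n,r_n)$ with $r_n\to\infty$ whose centres escape to infinity and which stay at distance tending to infinity from $\LL$. Fix $\bk\in\R^2$ with $|\bk|^2=\nu$ and set $\psi_n(x)=N_n\,\eu^{i\bk\cdot x}\,\theta_n(x)$, where $\theta_n$ is a smooth envelope supported in $B(x_n,r_n)$, slowly varying so that $\no{\nabla\theta_n}+\no{\Delta\theta_n}=o(\no{\theta_n})$, and $N_n$ normalizes. On $\supp\psi_n$ we have $V=0$ and $\delta_\LL=0$, so $H$ acts as $-\Delta$ and a direct computation gives $\no{(H-\nu)\psi_n}=\no{(-\Delta-\nu)\psi_n}\to 0$, while the escape of supports yields $\psi_n\rightharpoonup 0$ and, after thinning, orthonormality. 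This already proves $[0,\infty)\subseteq\sigma_{\mathrm{ess}}(H)$ and settles the whole claim when $V_0\ge\al^2$.

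When $V_0<\al^2$ the remaining energies $\nu\in[\mu,0)$ must be reached through the transverse bound mode, which lives on the curve. Fix the asymptote $\varphi=\beta$ and use coordinates $(u,v)$ in which it is the line $\{u=0\}$ and the normal operator is exactly $h$; by assumption (d) the distance $\rho(v):=\mathrm{dist}(\LL,\{u=0\})$ and the curvature of $\LL$ tend to $0$ as $v\to+\infty$. Let $\chi$ be the $L^2$ eigenfunction of $h$ at $\mu$ from Lemma~\ref{propChi}(iii), which decays exponentially in $|u|$, put $k:=\sqrt{\nu-\mu}$ and set $\psi_n(u,v)=N_n\,\chi(u)\,\eu^{ikv}\,g_n(v)$ with $g_n$ a long smooth cutoff supported in $[R_n,R_n+\ell_n]$, $R_n\to\infty$, $\ell_n\to\infty$, and $\no{g_n'}+\no{g_n''}=o(\no{g_n})$. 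Again $\psi_n\rightharpoonup 0$ and the sequence is singular.

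The residual now requires the asymptotic separation of variables. The transverse action reproduces $\mu\chi$, the longitudinal $-\d_v^2$ contributes $k^2$ together with $g_n',g_n''$ terms that are $o(1)$ after normalization, and $\mu+k^2=\nu$; the non-flatness of the $(u,v)$ chart produces coefficients that are $O(\rho)+O(\rho')$ on $\supp\psi_n$ and hence negligible. The genuine obstacle is the singular term: one must show that replacing $\delta_{\{u=0\}}$ by the true $\delta_\LL$ (and the straight potential step by $V$) changes the form by $o(\no{\psi_n}^2)$. This amounts to controlling
\[
\Big|\,\al\!\int_\R|\psi_n(\LL(s))|^2\,\D s-\al\!\int_{\{u=0\}}|\psi_n|^2\Big|
\]
and the corresponding off-diagonal pairing against $H^1$ test functions; using $\rho(v)\to 0$ on $\supp g_n$ together with the boundedness and exponential decay of $\chi$ and $\chi'$ and a one-dimensional trace estimate in the normal variable, this discrepancy is $o(\no{g_n}^2)$, so after normalization the residual tends to zero. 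Equivalently, a local straightening diffeomorphism maps $H$ to the separated operator plus a perturbation with coefficients $O(\rho)+O(\rho')$ on the relevant tube, handled the same way. This yields $\nu\in\sigma_{\mathrm{ess}}(H)$ for all $\nu\in[\mu,0)$, and closedness of $\sigma_{\mathrm{ess}}(H)$ supplies the endpoints, completing the proof.
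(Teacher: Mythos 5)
Your proposal is correct, and it coincides with the paper's proof only in part, so a comparison is worthwhile. The paper uses a \emph{single} construction for the whole half-line $[\mu,\infty)$: trial functions $\psi_i(x,y)=f(x-a_i)\,g(y)$ in Cartesian coordinates adapted to one asymptote, where $g$ is a compactly supported approximate minimizer of the transverse operator $h$ at the threshold energy $\mu$ (an approximate eigenfunction when $V_0<\alpha^2$, an approximate zero-energy state coming from $\sigma_{\mathrm{ess}}(h)=[0,\infty)$ otherwise), $f$ oscillates at energy $\nu-\mu$, and the product is translated to infinity along the asymptote; the two error terms --- the potential on the sliver $\mathcal{A}_i$ between $\LL$ and the asymptote, and the mismatch $\|(\delta_{y=0}-\delta_\LL)\psi_i\|$ --- vanish as $a_i\to\infty$, and Weyl's criterion applies. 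You instead split the interval: for $\nu\ge 0$ you use free plane waves in large balls escaping to infinity inside the potential-free region, which never meet $\LL$ or the bias, and only for $\nu\in[\mu,0)$ (hence only in the subcritical case) do you build asymptote quasi-modes, using the true exponentially decaying transverse eigenfunction $\chi$ with lengthening envelopes rather than a fixed compactly supported profile. Your $\nu\ge 0$ argument is simpler than the paper's and settles the critical and supercritical cases without any trace estimates at all; moreover, your insistence on the form ($H^{-1}$) version of Weyl's criterion is the more honest formulation, since product trial functions of this type satisfy the $\delta$-matching condition across the asymptote rather than across $\LL$, so they are not in the operator domain of $H$ --- this is exactly the point the paper glosses over when it writes the term $\|(\delta_{y=0}-\delta_\LL)\psi_i\|$ and declares it an $L^2$-norm over curve segments. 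What the paper's route buys is economy: one construction, fixed profiles, translations only, and a single ``the curve approaches its asymptote'' argument controlling all errors. Two small remarks on your write-up: the hypotheses do not guarantee that the \emph{curvature} of $\LL$ tends to zero (only the distance $\rho$ to the asymptote does, and, by convexity, the slope $\rho'$), but your error analysis uses only $O(\rho)+O(\rho')$, so nothing is lost; and the appeal to closedness of $\sigma_{\mathrm{ess}}(H)$ is dispensable, since both of your constructions work directly at the endpoints $\nu=\mu$ and $\nu=0$.
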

 \begin{proof}
Let $\nu \ge \mu$ be fixed. To prove the lemma it suffices to show that for any fixed $\varepsilon > 0$ one can find an infinite-dimensional subspace $M$ such that $\|(H-\nu) \psi \| < \varepsilon$ holds
for any $\psi \in M$. Denote $\zeta :=\mu-\nu \ge 0 $ and let $f \in \C_0^2(\R)$ be such that $\|f\|=1$ and $\big\| (-\frac{\D^2}{\D x^2} - \zeta)f\big\| < \Red{\frac14 \varepsilon}$. Furthermore, let $g \in \C_0^2(\R)$ be such that $\|g\|=1$ and $\| (h - \mu)g\| < \Red{\frac14 \varepsilon}$. The functions $f,g$ can always be found in view of the the fact that the essential spectrum of $-\frac{\D^2}{\D x^2}$ is $[0,\infty)$ and of Lemma~\ref{propChi}(iii). Next we choose a Cartesian system of coordinates such that the $x$-axis coincides with one of the asymptotes and the origin is at the point of the asymptotes crossing, and we put $ \psi_i (x,y) = f(x-a_i)g(y)$, where the numbers $a_i,\: i=1,2,\dots$, will be chosen later. Notice that if the $a_i$'s are sufficiently large the other asymptote, as well as the branch of the curve approaching it do not intersect with the support of $\psi_i$. Hence we have
\beq \label{HVZeq1}
\|(H-\nu)\psi_i\| \leq \Big\|(-\frac{\D^2}{\D x^2}-\zeta)f\Big\|  +\|(h-\mu)g\| + \Red{V_0 \|\psi_i\|_{\mathcal{A}_i}} +\|(\delta_{y=0}-\delta_\LL) \psi_i\|\,,
\eeq
\Red{where $\mathcal{A}_i$ is the part of the region between $\LL$ and the appropriate asymptote which lies in the support of $\psi_i$ and the last term is understood as the $L^2$-norm over the two curve segments contained in the border of $\mathcal{A}_i$.} The first two terms on the right-hand side of \eqref{HVZ1} can be estimated by \Red{$\frac12 \varepsilon$. Furthermore, in view of the assumptions (a), (d) above the other two other terms tend to zero if $f,g$ are kept fixed and $a_i \to \infty$.} This yields $\|(H-\nu)\psi_i\| < \varepsilon$ for all $a_i$ large enough. In addition, one can always choose the $a_i$'s in such a way that $\supp \{\psi_i\} \cap \supp \{\psi_j\} = \emptyset$ holds for $i \neq j $, hence Weyl's criterion applies.
\end{proof}

This result has to be complemented by the following claim, which we are going to prove by a method based on the partition of unity in the configuration space. This technique was originally developed for analysis of multi-particle Schr\"odinger operators by G.~Zhislin \cite{Zh60}.

\begin {lemma} \label{HVZ2}
 $\:\sigma _\mathrm{ess} (H) \cap (-\infty, \mu) = \emptyset$.
\end{lemma}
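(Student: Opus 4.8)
The plan is to combine the partition-of-unity localization technique referred to above with the one-dimensional result of Lemma~\ref{propChi}. Since $\mu\le 0$ in all cases, the only mechanism that could push spectrum below $\mu$ is the $\delta$-interaction along the two asymptotic ends of $\LL$, and there the operator is governed by the transverse problem $h$ whose spectrum starts precisely at $\mu$. Concretely, for a smooth partition of unity $\sum_j\chi_j^2=1$ the localization identity
\[
\la\psi,H\psi\ra=\sum_{j}\la\chi_j\psi,H\chi_j\psi\ra-\sum_{j}\no{|\nabla\chi_j|\,\psi}^2
\]
holds, the potential and $\delta$-parts producing no error because $\sum_j\chi_j^2=1$. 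I take $\chi_0$ supported in a ball $B_R$, two cutoffs $\chi_1,\chi_2$ supported outside $B_R$ in disjoint angular sectors about the two (non-parallel, by (d)) asymptotes, and a last one supported in the remaining exterior. Spreading the transitions over a width of order $R$ makes $\sum_j|\nabla\chi_j|^2\le CR^{-2}$ uniformly small; the piece on $B_R$ has compact resolvent, hence empty essential spectrum; and on the remaining exterior the support avoids $\LL$, so there $H=-\Delta+V\ge0\ge\mu$. It therefore suffices to show that, for every $\eps>0$ and all $R$ large, $\la\phi,H\phi\ra\ge(\mu-\eps)\no{\phi}^2$ for each $\phi$ supported in one sector.

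Fix such a sector and rotate coordinates so that its asymptote is the positive $x$-axis. By (d) and the piecewise $\C^2$ regularity the tangent of $\LL$ tends to the asymptotic direction, so the branch inside the sector is a graph $y=\gamma(x)$ with $\gamma(x)\to0$, $\gamma'(x)\to0$, and the set $\{V=V_0\}$ is one of the half-slices $\{\pm(y-\gamma(x))>0\}$. With $\D s=\sqrt{1+\gamma'(x)^2}\,\D x$ along $\LL$ the form reads
\[
\la\phi,H\phi\ra=\int\!\!\int\big(|\d_x\phi|^2+|\d_y\phi|^2+V|\phi|^2\big)\,\D x\,\D y-\int\tilde{\alpha}(x)\,|\phi(x,\gamma(x))|^2\,\D x,\qquad \tilde{\alpha}(x):=\alpha\sqrt{1+\gamma'(x)^2}.
\]
I discard the nonnegative term $\int\!\!\int|\d_x\phi|^2$ and treat the remainder slice by slice in $x$.

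Since $\phi$ is compactly supported inside the open sector, each transverse slice $\phi(x,\cdot)$ vanishes near the lateral boundary and may be extended by zero to all of $\R$; the substitution $y\mapsto y-\gamma(x)$ then turns the transverse integral into the quadratic form of a one-dimensional operator of the type of Lemma~\ref{propChi} on the whole line, with coupling $\tilde{\alpha}(x)$ in place of $\alpha$. Its lowest spectral point, given explicitly by Lemma~\ref{propChi}(ii)--(iii) as a continuous function of the coupling, tends to $\mu$ as $\tilde{\alpha}(x)\to\alpha$; because $\sup_x|\gamma'(x)|\to0$ as $R\to\infty$, it is $\ge\mu-\eps$ uniformly in $x$ once $R$ is large. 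Hence the transverse form is bounded below by $(\mu-\eps)\int|\phi(x,y)|^2\,\D y$, and integrating in $x$ yields the desired bound $\la\phi,H\phi\ra\ge(\mu-\eps)\no{\phi}^2$.

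The spectral input is thus immediate from Lemma~\ref{propChi}; the main obstacle is the geometric bookkeeping near the ends. One has to establish the graph representation with $\gamma,\gamma'\to0$ (equivalently, that the curvature of $\LL$ vanishes at infinity and the normal coordinates are nondegenerate far out) and to make the resulting perturbations --- of the $\delta$-support, of the location of the potential jump, and of the arc-length factor $\tilde{\alpha}(x)$ --- genuinely uniform in the slice, so that they produce only an $o(1)$ error as $R\to\infty$. Granting this, the localization estimate combined with the uniform lower bound $\mu-\eps$ rules out, through Weyl's criterion, any point of $\sigma_\mathrm{ess}(H)$ below $\mu$.
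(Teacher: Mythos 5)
Your proposal is correct and follows essentially the same route as the paper: a Zhislin-type partition of unity into a compact core, curve-free exterior pieces bounded below by $0\ge\mu$, and angular sectors around the asymptotes, where the form is estimated slice by slice via the one-dimensional operator of Lemma~\ref{propChi} with the effective coupling $\alpha\sqrt{1+l'^2(x)}\to\alpha$. The only cosmetic differences are that the paper performs the radial and angular localizations in two successive steps and disposes of the compact piece by a finite-dimensional-subspace (variational) argument rather than your compact-resolvent/Weyl-sequence phrasing, but these are equivalent implementations of the same idea.
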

\begin{proof}
Our aim is to show that for any fixed $\varepsilon >0$ one can find a finite-dimensional subspace $M$ such that for all $\psi \in \C^2_0 (\R^2)$ with $\psi \perp M$ we have the inequality
$$
\la H \psi, \psi \ra \geq (\mu -\varepsilon)\|\psi\|^2\,.
$$
Let $b>0$ be fixed and let $u,w : \R_+ \to [0,1]$ be functions from $\C^1(\R_+)$ satisfying $u^2+w^2=1$, $u(t)=1$ for $t\leq 1$, and $u(t) = 0$ for $t \geq (1+b)$. We choose the polar coordinates $(\rho, \varphi )$ centered at the asymptotes crossing and oriented in such a way that the asymptotes of $\LL$ correspond to $\varphi = \pm\beta$. Let $\psi \in \C_0^1 (\R^2)$ and consider the functions $U,W: \R^2 \to [0,1]$, where  $U=u (\rho a^{-1}), W=w(\rho a^{-1})$, and the constant $a$ will be chosen later. We set $\psi_0=\psi U$ and $\psi_1 = \psi W$. We have
\begin{eqnarray} \label{HVZeq2}
\lefteqn{L[\psi]:=\|\nabla \psi\|^2 +\|V^\frac12 \psi \|^2 -\alpha \int_\R |\psi(\LL(s))|^2\, \D s} \\ && \qquad
= L[\psi_0] +L[\psi_1] - \int_{\R^2} \{ |\nabla U|^2 + |\nabla W|^2 \} |\psi|^2 \, \D x \nonumber \\[.3em] && \qquad \geq L[\psi_0] +L[\psi_1] -
Ca^{-2}\|\psi\|^2 =L_1[\psi_0] +L_1[\psi_1] \nonumber
\end{eqnarray}
\Red{for some $C>0$}, where
$$
L_1[\phi] := L[\phi]- Ca^{-2}\|\phi\|^2 \,.
$$
Next we estimate the term $L_1[\psi_0]$. In the same ways as one checks the semi-boundedness of operator $H$ from below we get
$$
\frac12 \|\nabla \psi_0\|^2 +\|V^{1/2} \psi_0 \|^2 -\alpha \int_\R |\psi_0(\LL(s))|^2\, \D s \geq
C_0 \|\psi_0\|^2
$$
for some $C_0$. Notice that by construction $\rho \leq a (1+b)$ holds on the support of $\psi_0$ and recall that the Laplace operator on a disc has a purely discrete spectrum. Consequently, one can find a finite-dimensional subspace $M_1$ such that for all $\psi \perp M_1$ we have the inequality $\frac12 \|\nabla \psi_0\|^2 \geq \big(|C_0| +Ca^{-2}\big)\|\psi\|^2$. This yields the bound $L_1[\psi_0] \geq 0 \geq \mu\|\psi_0\|^2\,$. Let now $M := \{\phi |\: \phi U \in  M_1\}$. Obviously, $\psi _0 \perp M_1$ holds for all $\psi \perp M$ and the dimension of $M $ cannot then exceed that of $M_1$.

Our next aim is to show that $L_1[\psi_1] \geq (\mu -\eps)\|\psi_1 \|^2 $ holds for any fixed $\eps$ provided $a$ is chosen to large enough. To this end we will make a further partition of the unity in the configuration space, this time in the angular direction. We choose a $\gamma$ satisfying $0 < \gamma < \min \{ \frac{\pi}{2}-\beta,\  \beta\}$ and consider a function $u_1:\:[-\pi,\pi] \to [0;1]$ from $\C^1([-\pi, \pi])$ such that $u_1(\varphi)=1$ for $\varphi \in [\beta -\frac12 \gamma, \beta +\frac12 \gamma]$ and $u_1(\varphi)=0$ for $\varphi \notin [\beta - \gamma, \beta +\gamma]\,$. We put further $u_2 (\varphi) := u_1(-\varphi)$ and $u_0 (\varphi) := \sqrt{1-u_1^2(\varphi)- u_2^2(\varphi)}$. Then we have
\begin{eqnarray} \label{HVZeq3}
\lefteqn{\|\nabla \psi_1\|^2 \geq \|\nabla (\psi_1 u_1)\|^2 + \|\nabla (\psi_1 u_2)\|^2 +
\|\nabla (\psi_1 u_0)\|^2 - C \int |\psi_1|^2 \rho^{-2}\,\rho\, \D \rho\, \D \varphi} \\[.3em] && \qquad \geq
\|\nabla (\psi_1 u_1)\|^2 + \|\nabla (\psi_1 u_2)\|^2 +
\|\nabla (\psi_1 u_0)\|^2 - C a^{-2}\|\psi_1\|^2 \,. \phantom{AAAAAAAAAAA} \nonumber
\end{eqnarray}
Note that by construction the curve $\LL$ does not intersect the support of $\psi_1 u_0$ for all $a$ enough, which together with \eqref{HVZeq3} yields
$$
L_1[\psi_1] \geq \sum_{i=1}^2{L[\psi_1 u_i]} - \eps \|\psi_1\|^2 \,,
$$
where $\eps > 0$ can be chosen arbitrarily small for large $a$.

Let us now estimate the value of $L[\psi_1 u_1].$ Similary to the proof of the preceding lemma, we introduce the Cartesian coordinates with the origin at the asymptotes crossing and the $x$-axis coinciding with the asymptote $\varphi = \beta.$ Moreover, we assume that $a$ is chosen large enough, so that on the support of $\psi_1 u_1$ the curve $\LL$ is twice differentiable and given by the equation $y=l(x) .$ Then we have
\begin{eqnarray} \label{HVZeq4}
\lefteqn{L[\psi_1 u_1] \geq \int_{a\cos \gamma}^{\infty} \bigg\{\int_\R {\Big(\Big|\frac{\partial (\psi_1 u_1)}{\partial y}\Big|^2 +V(y)|\psi_1u_1|^2\Big)}\,\D y} \\[.3em] && \qquad \qquad -\alpha \sqrt{1+l'^2(x)}|\psi_1(x,l(x)) u_1(x,l(x))|^2 \bigg\}\, \D x\,. \nonumber
\end{eqnarray}
Recall that $l'(x) < \eps _1 $ holds for any fixed $\eps_1 > 0$ and a sufficiently large $a$ on the support of $\psi_1 u_1$. Applying Lemma~\ref{propChi} to the operator
$$
h = -\frac{\D^2}{\D y^2} -\alpha \sqrt{1+\eps_1}\delta (y) + V(y)
$$
and then integrating over $x$ we arrive at the estimates $L[\psi_1 u_1] \geq 0$ if $V_0^2 \geq (1 + \eps_1)\alpha^2 $, and $L[\psi_1 u_1] \geq -\frac14 \big(\alpha^2(1+\eps_1) -V_0 \big)^2\, \alpha^{-2}\,(1+\eps_1)^{-1}\, \|\psi_1u_1\|^2$ provided $V_0^2 < (1 + \eps_1)\alpha^2 .$ In both cases we thus get for all sufficiently small $\eps_1$ the bound
$$
L[\psi_1 u_1] \geq (\mu -\eps)\|\psi_1 u_1\|^2\,.
$$
In a similar way we check that $L[\psi_1 u_2] \geq (\mu -\eps)\|\psi_1 u_2\|^2$, which completes the proof.
\end{proof}

\section{An exterior positive potential}
\setcounter{equation}{0} \label{s:ext}

\noindent Our main question in this paper is about the conditions under which the operator \eqref{Ham} has or does not have eigenvalues below the threshold $\mu$. As we shall see the answer depends substantially on the orientation of the potential bias. By the assumption (a) above the curve $\LL$ separates two regions of which one is convex. We call the latter an interior region, denoted by $\mathcal{I}_\LL$, its complement is called exterior and denoted by $\mathcal{E}_\LL$.

\subsection{Example of a broken line}

\noindent Consider first the simplest case, when $\LL$ consists of two half-lines $AO$ and $OB$ with the a common origin at the point zero in the critical and supercritical case.
\begin{theorem} \label{main2}
Assume that $\angle AOB : = 2\beta < \pi $, and furthermore, that $V(x)= V_0 \ge \alpha^2$ holds outside $\angle AOB$ and $V(x)=0 $ inside the angle. Then $\sigma_\mathrm{disc}(H) = 0\,$.
\end{theorem}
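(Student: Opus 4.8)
The plan is to prove the statement by showing directly that $H$ is non-negative. Since $V_0 \ge \alpha^2$, Theorem~\ref{main1} gives $\sigma_\mathrm{ess}(H) = [0,\infty)$, so it suffices to establish that the quadratic form
\[
L[\psi] = \int_{\R^2} |\nabla\psi|^2\,\D x + V_0\int_{\mathcal{E}_\LL}|\psi|^2\,\D x - \alpha\int_\R |\psi(\LL(s))|^2\,\D s
\]
is non-negative on the core $\C_0^\infty(\R^2)$; then $H \ge 0$, and since the threshold $0$ is the bottom of the essential spectrum, no eigenvalue can lie below it, whence $\sigma_\mathrm{disc}(H) = \emptyset$.

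First I would fix polar coordinates $(\rho,\varphi)$ centered at $O$ with $OA$, $OB$ along $\varphi = \beta$ and $\varphi = -\beta$, so that the convex interior is $\mathcal{I}_\LL = \{|\varphi| < \beta\}$ with $V=0$, and the exterior $\mathcal{E}_\LL = \{|\varphi| > \beta\}$ carries $V = V_0$. To the half-line $OA$ I associate the orthogonal half-strip $\mathcal{S}_A$ consisting of those exterior points whose orthogonal projection onto the line through $OA$ lands on the ray $OA$ itself; in Cartesian coordinates $(x_1,y_1)$ rotated so that $x_1$ runs along $OA$ and $y_1$ points into the exterior, $\mathcal{S}_A$ is the quarter-plane $\{x_1 \ge 0,\ y_1 \ge 0\}$. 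I define $\mathcal{S}_B$ analogously for $OB$.

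The two geometric facts on which the whole argument rests, and which I would establish next, are: (i) every orthogonal ray $\{x_1 = \mathrm{const} \ge 0\}$ stays entirely inside $\mathcal{E}_\LL$, and (ii) $\mathcal{S}_A \cap \mathcal{S}_B = \emptyset$. Both follow from the hypothesis $2\beta < \pi$, i.e. $\beta < \tfrac{\pi}{2}$: the points of $\mathcal{S}_A$ fill exactly the angular sector $\beta \le \varphi \le \beta + \tfrac{\pi}{2} < \pi$, which is contained in the exterior, and symmetrically $\mathcal{S}_B$ fills $-\beta - \tfrac{\pi}{2} \le \varphi \le -\beta$; these two sectors are disjoint. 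This is precisely where the convexity of the potential-free interior enters, and I expect the careful verification of (i)--(ii) near the corner $O$ to be the only genuinely delicate point of the proof.

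On each slice of $\mathcal{S}_A$ the map $y_1 \mapsto \psi(x_1,y_1)$ is smooth and compactly supported, so by fact (i) Lemma~\ref{propFunc} applies on the whole ray and, using $\sqrt{V_0} \ge \alpha$,
\[
\int_0^\infty \big(|\partial_{y_1}\psi|^2 + V_0|\psi|^2\big)\,\D y_1 \ge \sqrt{V_0}\,|\psi(x_1,0)|^2 \ge \alpha\,|\psi(x_1,0)|^2\,.
\]
Integrating over $x_1 \ge 0$ and discarding the non-negative tangential term $|\partial_{x_1}\psi|^2$ gives $\int_{\mathcal{S}_A}(|\nabla\psi|^2 + V_0|\psi|^2)\,\D x \ge \alpha\int_{OA}|\psi|^2\,\D s$, and likewise for $OB$. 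By (i)--(ii) the strips are disjoint subsets of $\mathcal{E}_\LL$, so adding the two estimates yields $\int_{\mathcal{E}_\LL}(|\nabla\psi|^2 + V_0|\psi|^2)\,\D x \ge \alpha\int_\R|\psi(\LL(s))|^2\,\D s$. Consequently $L[\psi] \ge \int_{\mathcal{I}_\LL}|\nabla\psi|^2\,\D x \ge 0$, establishing $H \ge 0$ and hence the absence of discrete spectrum.
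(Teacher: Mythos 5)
Your proposal is correct and follows essentially the same route as the paper: the quarter-planes $\mathcal{S}_A$, $\mathcal{S}_B$ are exactly the regions $\Omega_1$, $\Omega_2$ (bounded by each ray and the perpendicular at $O$ on the exterior side) used in the paper's proof, their disjointness comes from $2\beta<\pi$ in both arguments, and the slice-wise application of Lemma~\ref{propFunc} with $\sqrt{V_0}\ge\alpha$ after discarding the tangential derivative and the interior contribution is identical. The only difference is presentational (you describe the regions via orthogonal projection onto the rays rather than via the auxiliary half-lines $OC$, $OD$).
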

\begin{proof}
To prove the claim it suffices to show that $\la H \psi, \psi \ra \ge 0$ holds on a core of the operator $H$, for instance, for any $\psi \in \C^2_0 (\R^2).$ Let $OC$ be the half-line with the endpoint $O$, orthogonal to $OB$, and directed into the halfplane opposite to the one containing $OA$. Similarly, let $OD$ be the half-line orthogonal to $AO$ emerging from $O$ and laying in the halfplane opposite to the one containing $OB$. By $\Omega _1$ and $\Omega _2$ we denote the regions inside the angles $\angle DOA$ and $\angle BOC$, respectively. Notice that they do not overlap due to the condition $2\beta < \pi$.  Obviously, for any $\psi \in \C^2_0 (\R^2)$ we have
\beq \label{est1}
\la H \psi, \psi \ra \ge \sum_{i=1}^2 \int_{\Omega _i} \big(|\nabla \psi |^2 + V_0 |\psi|^2 \big)(x)\, \D x - \alpha \int_\LL |\psi (\LL(s))|^2 \D s\,,
\eeq
because we have neglected a non-negative contribution from $\R^2\setminus(\Omega_1\cup\Omega_2)$. Moreover,
\begin{eqnarray*}
\lefteqn{
\int_{\Omega _1} \big(|\nabla \psi |^2 + V_0 |\psi|^2 \big)(x)\, \D x = \int_0^{\infty} \int_0^{\infty} \big( |\partial_{\tau}\psi|^2 +
|\partial_{s}\psi|^2 + V_0 |\psi|^2\big)(x)\, \D\tau \D s} \\ &&
\quad \ge \int_0^{\infty} \int_0^{\infty} \big(|\partial_{\tau}\psi|^2 +
 V_0 |\psi|^2\big)(x)\, \D\tau \D s \phantom{AAAAAAAAAAAAAAAAAAAAAAAAAA}
\end{eqnarray*}
where $\tau$ is the coordinate in $\Omega_1$ in the direction orthogonal to $AO$. This has to be compared with the appropriate part of the last term in \eqref{est1}. In view of Lemma~\ref{propFunc} we infer that
$$
\int_0^{\infty} \big(|\partial_{\tau}\psi|^2 + V_0 |\psi|^2\big)\, \D\tau \ge \sqrt{V_0}\,|\psi (\LL(s))|^2
$$
holds any $s$, which for $V_0 \ge \alpha^2$ implies
$$
\int_{\Omega _1} \big(|\nabla \psi |^2 + V_0 |\psi|^2 \big)(x)\, \D x -\alpha \int _{\LL\cap \mathrm{bd}(\Omega_1)} |\psi (\LL(s))|^2\, \D s \ge 0\,.
$$
A similar estimate can be made for $\Omega_2$ which yields the sought result.
\end{proof}

\subsection{Example continued: the subcritical case}

\noindent In the subcritical situation the spectral picture may change. We know from Ref.~\onlinecite{EI01} that a nontrivially curved and asymptotically straight leaky wire without a potential bias has always at least one bound state. Regarding our model as a perturbation of such a system one may expect that this will remain true if $V_0$ is small enough. It is indeed that case as the following claim shows.
\begin {proposition}
Let $\angle AOB = 2\beta < \pi$ be fixed. Then there exists a $V_c \in (0,\alpha^2)$ such that for all $ 0\le V_0 \le V_c$ the operator $H$ has at least one isolated eigenvalue below the threshold $\mu$ of its essential spectrum.
\end {proposition}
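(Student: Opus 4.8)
The plan is to treat $H$ as a perturbation of the bias-free leaky wire $H_0 := -\Delta - \alpha\delta_\LL$, obtained by setting $V_0=0$, and to feed the known bound state of $H_0$ into the variational principle while tracking the bias-dependent bottom $\mu(V_0) = -\frac14\alpha^{-2}(\alpha^2-V_0)^2$ of the essential spectrum supplied by Theorem~\ref{main1}. First I would record the structural facts: the form domain of $H$ is $H^1(\R^2)$ for every $V_0$, and $H = H_0 + V$ with $V = V_0\,\mathbf{1}_{\mathcal{E}_\LL}$ a bounded nonnegative multiplication operator. Since $\LL$ is a genuinely bent ($2\beta<\pi$), asymptotically straight curve, the result quoted from Ref.~\onlinecite{EI01} gives an isolated eigenvalue $\lambda_0$ of $H_0$ strictly below $\mu(0) = -\frac14\alpha^2$ (the bottom of $\sigma_\mathrm{ess}(H_0)$); fix a normalized eigenfunction $\psi_0$, which lies in $\mathrm{dom}(H_0)\subset H^1(\R^2)$ and hence in the form domain of $H$ for all $V_0$.

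Next I would evaluate the quadratic form of $H$ on this trial function. Because $V$ is an ordinary bounded potential,
\beq
\la H\psi_0,\psi_0\ra = \la H_0\psi_0,\psi_0\ra + \la V\psi_0,\psi_0\ra = \lambda_0 + V_0\, c\,, \qquad c := \int_{\mathcal{E}_\LL} |\psi_0|^2\, \D x \in [0,1]\,.
\eeq
The comparison is then made against the moving threshold. Setting $G(V_0) := \mu(V_0) - \la H\psi_0,\psi_0\ra = \mu(V_0) - \lambda_0 - c\,V_0$, we have $G(0) = \mu(0) - \lambda_0 > 0$ by the strict inequality above, and $G$ is continuous (indeed a quadratic polynomial) in $V_0$. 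Hence there exists $V_c \in (0,\alpha^2)$ with $G(V_0) > 0$ for all $0 \le V_0 \le V_c$. For such $V_0$ the min-max principle gives
\beq
\inf\sigma(H) \le \la H\psi_0,\psi_0\ra < \mu(V_0) = \inf\sigma_\mathrm{ess}(H)\,,
\eeq
so $H$ has spectrum strictly below its essential spectrum, i.e. at least one isolated eigenvalue below $\mu$, which is the assertion.

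The only substantial ingredient is the existence of the bound state of $H_0$; once that is granted the rest is a soft continuity-plus-variational argument, so I expect the main subtlety to lie in justifying the $V_0=0$ eigenvalue for the \emph{broken-line} curve, a non-smooth special case of the class treated in Ref.~\onlinecite{EI01}. A secondary point worth flagging, which also explains why one cannot reach the whole subcritical range this way, is that the trial energy climbs with slope $c$ whereas $\mu$ climbs only with slope $\mu'(0) = \frac12$: if $c > \frac12$ this fixed trial function stops beating the threshold for larger $V_0$, and pushing $V_c$ towards $\alpha^2$ would require $V_0$-dependent trial functions rather than the frozen $\psi_0$.
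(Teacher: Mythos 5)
Your proposal is correct and follows essentially the same route as the paper: both take the eigenfunction $\psi_0$ of the bias-free operator $H_0$ (whose bound state is quoted from Ref.~\onlinecite{EI01}) as trial function, note $\la H\psi_0,\psi_0\ra \le \lambda_0 + V_0$ (you keep the slightly sharper constant $c=\int_{\mathcal{E}_\LL}|\psi_0|^2$), and conclude via the min-max principle that for $V_0$ small the trial energy stays below the moving threshold $\mu(V_0)$, which increases with $V_0$. Your closing remarks --- that the broken line is a non-smooth case of the curve class in Ref.~\onlinecite{EI01}, and that the frozen trial function cannot reach the whole subcritical range --- are sound observations the paper itself does not spell out, but they do not change the substance of the argument.
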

\begin{proof}
Notice that $\mu < 0$ holds for $V_0 < \alpha^2$, and consequently, it is not sufficient to construct a trial function $\psi$ such that $\la H \psi, \psi \ra < 0$. On the other hand, we have mentioned that the operator $H_0= H_{\alpha,\LL,0}$ without the potential bias has an eigenvalue $\lambda_0$ below $\mu_0 = \inf \sigma_\mathrm{ess} (H_0)$. Denoting by $\psi_0$ the corresponding eigenfunction, we have
$$
\la H \psi_0, \psi_0 \ra = \la H_0 \psi_0, \psi_0 \ra + \la V\psi_0, \psi_0 \ra \le (\lambda_0 + V_0)\| \psi_0 \|^2\,.
$$
The positivity of $V_0$ implies $\lambda_0 < \mu_0 < \mu $, which for $V_0$ small enough yields $\lambda_0 +V_0 < \mu$.
\end{proof}

In fact, one can choose for $\lambda_0$ the lowest eigenvalue of $H_0$ and obtain in this way an estimate of the critical value $V_c$. In particular, one can observe that the latter depends on the value of $\beta$. Our next goal is that is to show that for any subcritical value, $V_0 < \alpha ^2$, the system can have a discrete spectrum of any finite dimension provided $\beta$ is chosen small enough.

\begin {proposition} \label{smallangle}
Let $V_0 < \alpha^2$ be fixed. Then to any given $n\in\mathbb{N}$ there is a $\beta_n\in (0,\frac{\pi}{2})$ such that for all $ 0< \beta \le \beta_n$ we have $\sharp\sigma_\mathrm{disc}(H) \ge n$.
\end {proposition}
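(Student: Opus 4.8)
The strategy will be variational. Since $\mu=\inf\sigma_\mathrm{ess}(H)$ by Theorem~\ref{main1}, it suffices by the min-max principle to exhibit, for $\beta$ small, an $n$-dimensional subspace $\mathcal{M}\subset H^1(\R^2)$ on which the quadratic form of $H$ satisfies $\la H\psi,\psi\ra<\mu\no{\psi}^2$ for every nonzero $\psi\in\mathcal{M}$. The mechanism I exploit is that a narrow wedge behaves like a long quantum waveguide: near the vertex the two arms of $\LL$ are close, the transverse motion feels both $\delta$-interactions at once, and this lowers the transverse threshold strictly below $\mu$. As $\beta\to0$ the spatial extent over which this lowering persists grows like $\beta^{-1}$, so the effective longitudinal well becomes arbitrarily long and supports arbitrarily many modes.

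To make this precise I would introduce coordinates $(x,y)$ with the $x$-axis along the bisector of $\angle AOB$ and the origin at $O$, so that for $x>0$ the two arms meet a transverse slice at $y=\pm D(x)$ with $D(x)=x\tan\beta$. At each $x>0$ I consider the one-dimensional transverse operator
\[
h_D=-\frac{\D^2}{\D y^2}-\alpha\sec\beta\,(\delta_{D}+\delta_{-D})+V_0\,\mathbf{1}_{\{|y|>D\}},
\]
whose lowest eigenvalue I call $E(D)$ and whose normalized ground state I call $\chi(D;\cdot)$. The factor $\sec\beta$ accounts for the ratio of arc length on the arms to $\D x$, exactly as in the proof of Lemma~\ref{HVZ2}, and contributes only $O(\beta^2)$. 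At coincidence $D=0$ the two centres merge into a single $\delta$ of strength $2\alpha\sec\beta$ in the constant background $V_0$, giving $E(0)=V_0-\alpha^2\sec^2\beta=V_0-\alpha^2+O(\beta^2)$, while as $D\to\infty$ the centres decouple into single straight-line problems of the type in Lemma~\ref{propChi}(iii) and $E(D)$ approaches the single-line threshold, which differs from $\mu$ only by $O(\beta^2)$. Since $V_0<\alpha^2$ one checks
\[
g:=\mu-(V_0-\alpha^2)=(\alpha^2-V_0)\,\frac{3\alpha^2+V_0}{4\alpha^2}>0,
\]
so by continuity of $E$ there is a fixed $d_0>0$, independent of $\beta$, with $E(D)\le\mu-\tfrac12 g$ for all $0\le D\le d_0$ and all sufficiently small $\beta$.

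With this in hand I build trial functions of product form $\psi(x,y)=f(x)\,\chi(D(x);y)$, with $f$ supported in an interval $[\rho_-,\rho_+]$. Inserting this ansatz and using that $\chi(D;\cdot)$ is the transverse ground state, the $y$-integration collapses the potential, the transverse kinetic, and the $\delta$-contributions to $E(D(x))\,|f(x)|^2$, leaving an effective one-dimensional bound
\[
\la H\psi,\psi\ra-\mu\no{\psi}^2\le\int_{\rho_-}^{\rho_+}\!\Big(|f'(x)|^2+\big(E(D(x))+R(x)-\mu\big)|f(x)|^2\Big)\,\D x+|\mu|\,o(1)\no{f}^2 ,
\]
where $R$ collects the $O(\beta^2)$ geometric terms: the transverse-profile drift $|D'(x)|^2\!\int|\partial_D\chi|^2=\tan^2\beta\cdot O(1)$, the deviation of the true metric near the bisector from the Cartesian one, and the variable-width curvature term. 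Choosing $\rho_-$ a fixed constant and $\rho_+=d_0/\sin\beta$, so that $D(x)\le d_0$ throughout, I obtain $E(D(x))+R(x)-\mu\le-\tfrac14 g$ on $[\rho_-,\rho_+]$ once $\beta$ is small. Taking for $f$ the span of the first $n$ Dirichlet eigenfunctions of $-\D^2/\D x^2$ on $[\rho_-,\rho_+]$, whose Rayleigh quotients are at most $(n\pi/(\rho_+-\rho_-))^2$, the right-hand side is bounded above by $\big((n\pi/(\rho_+-\rho_-))^2-\tfrac14 g\big)\no{f}^2<0$ as soon as $\rho_+-\rho_-=d_0/\sin\beta-\rho_-$ exceeds $2n\pi/\sqrt{g}$. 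This inequality defines $\beta_n$, and the resulting $n$-dimensional $\mathcal{M}$ yields at least $n$ eigenvalues below $\mu$ since $\mu$ is the bottom of the essential spectrum.

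The part requiring real care --- and where I expect the \textbf{main obstacle} --- is the uniform control of the remainder $R$ and of the normalization correction $o(1)$ over the interval $[\rho_-,\rho_+]$, whose length diverges as $\beta\to0$. One must verify that $\partial_D\chi$ is bounded in $L^2(\R)$ uniformly for $D\in[0,d_0]$ (smoothness of the simple transverse ground state in the parameter $D$, including the delicate coincidence limit $D\to0$), that the passage from the true geometry to the bisector-adapted coordinates produces only errors of order $\beta^2$ together with terms decaying like $\rho_-^{-2}$, and that the smooth cut-offs localizing $f$ away from the vertex and near $\rho_+$ cost energy negligible against the fixed well depth $\tfrac14 g$. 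Once these estimates are assembled, the semiclassical heuristic that the number of trapped modes grows like $\pi^{-1}\sqrt{g}\,(\rho_+-\rho_-)\sim\beta^{-1}$ becomes a rigorous lower bound, and $\sharp\sigma_\mathrm{disc}(H)\ge n$ follows for all $0<\beta\le\beta_n$.
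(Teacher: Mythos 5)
Your proposal is correct in its essentials and rests on the same physical mechanism as the paper's proof: for small $\beta$ the two arms of $\LL$ act on a transverse slice like a single $\delta$-interaction of (nearly) doubled strength, pushing the local transverse energy down to roughly $V_0-\alpha^2$, which lies below $\mu$ by the fixed gap $g=(\alpha^2-V_0)(3\alpha^2+V_0)/4\alpha^2$ that you compute correctly; product trial functions carrying $n$ longitudinal Dirichlet modes then give $n$ eigenvalues below $\mu$ by the minimax principle. The difference is in the implementation, and the paper's choices eliminate precisely the technical burden you flag as your main obstacle. The paper first uses the operator inequality $H\le H_0+V_0$, so the position of the bias becomes irrelevant; then, instead of the exact $x$-dependent transverse ground state $\chi(D(x);\cdot)$, it takes one fixed explicit profile $g(y)$, equal to $1$ for $|y|\le 2d$ (with $d$ of order $\tan\beta$) and decaying exponentially at rate $\alpha$ outside, so that both arms lie in the flat part of $g$ over the support of $f$: the $\delta$-term is then exactly $-\frac{2\alpha}{\cos\beta}\,\|f\|^2$ and all norms are available in closed form, with no $\partial_D\chi$ drift term, no normalization correction, and no parametrized eigenfunction whose regularity at the coincidence limit $D\to 0$ must be established. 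Finally, the paper keeps the longitudinal interval $(L,2L)$ \emph{fixed} and sends $\beta\to 0$ at fixed large $L$, rather than dilating the interval like $\beta^{-1}$, so no uniform control over a growing region is needed; the whole argument collapses to an explicit scalar inequality whose $\beta\to 0$ limit is negative once $L$ is large, precisely because $V_0<\alpha^2$. Your route is viable --- the smoothness in $D\in[0,d_0]$ of the simple ground state of the explicit two-$\delta$-plus-step operator $h_D$ can indeed be proved by direct computation and the implicit function theorem, and your error terms are genuinely $O(\beta^2)$ --- but these auxiliary lemmas are avoidable, and avoiding them is exactly what makes the paper's version of the argument elementary.
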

\begin{proof}
One obviously has $H \le H_0+V_0$, hence by the minimax principle\cite{RS} it is sufficient to show that $H_0+V_0$ has $n$ eigenvalues below $\mu$, the multiplicity being taken into account. However, this operator differs by a constant only from $H_0$ analyzed in Ref.~\onlinecite{EN03} and we can modify a variational argument used in that paper. Specifically, we choose the Cartesian coordinates with positive $x$-axis identified with the axis of the angle and $y$ perpendicular to it; the two half-lines they refer to the argument values $\pm\beta$. Our trial functions will be of the form
\beq \label{trial1}
\phi(x,y) = f(x)g(y)\,,
\eeq
where $f\in C^2(L,2L)$ is supposed to satisfy the condition $f(L)=f(2L)=0$ and $L$ is a parameter to be chosen later. Furthermore,
$$
g(y) =\left\{ \begin{array}{ll} 1 \quad & |y|\le 2d \\[.5em]
\eu^{\alpha(|y|-2d)} \quad & |y|\ge 2d \end{array} \right.
$$
with $d:= \tan\beta$. To prove the claim, we have to find an $n$-dimensional subspace spanned by functions of this type satisfying
$$
\|\nabla\phi\|^2 - \frac{2\alpha}{\cos\beta} \|f\|^2 +V_0\|\phi\|^2 < \mu \|\phi\|^2\,.
$$
Since $\|g\|^2 = 4d + \alpha^{-1}$ and $\|g'\|^2 = \alpha$, the norms are easily computed and the above condition is equivalent to
$$
- \left( \frac{ \alpha^2 - V_0}{2\alpha}\right)^2 > \frac{\alpha^2}{1+4d\alpha} \left( 1 - \frac{2}{\cos\beta} \right) + V_0 + \frac{\|f'\|^2}{\|f\|^2}\,,
$$
which can be further reformulated as
$$
\frac{4\alpha^4}{1+4d\alpha} \left( \frac54 - \frac{2}{\cos\beta} \right) + 4\alpha^2 \frac{\|f'\|^2}{\|f\|^2} + V_0^2 < 0\,.
$$
Choosing for $f$ functions from the span of the first $n$ eigenfunctions of the Dirichlet Laplacian on the interval $(L,2L)$, we achieve that the middle term at the left-hand side does not exceed $4\alpha^2\big( \frac{\pi n}{L} \big)^2$ and can be made arbitrarily small if $L$ is large enough. For a fixed choice of $L$ the left-hand side thus does not exceed
$$
-3\alpha^2 + 4\alpha^2 \left(\frac{\pi n}{L} \right)^2 + V_0^2
$$
in the limit $\beta\to 0$, and since $V_0<\alpha^2$, it is negative for all $\beta$ small enough.
\end{proof}

\subsection{More general curves}

\noindent Our next goal is to generalize the main result of this section, Theorem~\ref{main2}, to a wider class of curves specified by the assumptions (a)--(d) of Sec.~\ref{s:prelim}.

\begin{theorem} \label{main3}
For the described curve class, let $V(x)= 0$ in $\mathcal{I}_\LL$ and $V(x) = V_0 \ge \alpha^2$ otherwise. Then $\sigma (H) = [0,\infty)\,$.
\end{theorem}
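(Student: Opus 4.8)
The plan is to reduce the assertion to the nonnegativity of the quadratic form. Since $V_0\ge\alpha^2$, Theorem~\ref{main1} already gives $\sigma_\mathrm{ess}(H)=[0,\infty)$, so it suffices to show that $H$ has no eigenvalue below zero, i.e. that $\la H\psi,\psi\ra\ge 0$ for every $\psi$ in a core, say $\psi\in\C^2_0(\R^2)$. The strategy is then to repeat the mechanism of Theorem~\ref{main2}, replacing the two straight arms and the perpendiculars $OC,OD$ by a single global normal (Fermi) coordinate system attached to the whole curve, the existence of which is guaranteed by the convexity of $\mathcal{I}_\LL$.

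Concretely, because $\mathcal{I}_\LL$ is convex and $V=0$ there while $V=V_0$ on $\mathcal{E}_\LL$, I would first discard the nonnegative interior contribution $\int_{\mathcal{I}_\LL}|\nabla\psi|^2\ge 0$. On the exterior I introduce coordinates $x\mapsto(s,\tau)$, where $\tau=\mathrm{dist}(x,\overline{\mathcal{I}_\LL})$ and $\LL(s)$ is the nearest boundary point; convexity makes this projection single-valued and the outward normals non-intersecting, so the map is a global diffeomorphism of the smooth part of $\mathcal{E}_\LL$ with diagonal metric $|\nabla\psi|^2=J^{-2}|\partial_s\psi|^2+|\partial_\tau\psi|^2$ and area element $J\,\D s\,\D\tau$. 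The crucial geometric input is that the outer parallel curves of a convex set are longer than the curve itself, equivalently $J=1+\tau|\kappa(s)|\ge 1$, with $J=1$ on $\LL$. Dropping in addition the tangential term $J^{-2}|\partial_s\psi|^2\ge 0$, I arrive at
$$
\la H\psi,\psi\ra \ge \int_\R\Big(\int_0^\infty\big(|\partial_\tau\psi|^2+V_0|\psi|^2\big)(s,\tau)\,J(s,\tau)\,\D\tau\Big)\D s -\alpha\int_\R|\psi(\LL(s))|^2\,\D s\,.
$$
For each fixed $s$ the weight $J\ge 1$ can be removed from below, and Lemma~\ref{propFunc} applied to $\varphi(\cdot)=\psi(s,\cdot)$ gives $\int_0^\infty(|\partial_\tau\psi|^2+V_0|\psi|^2)\,J\,\D\tau\ge\sqrt{V_0}\,|\psi(\LL(s))|^2$. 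Since $\sqrt{V_0}\ge\alpha$, integrating over $s$ shows the right-hand side is nonnegative, whence $H\ge 0$ and the theorem follows.

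The step I expect to be the main obstacle is the rigorous construction of these coordinates for a curve that is only piecewise $\C^2$ and unbounded. At the finitely many corners the outward normal jumps, so $\mathcal{E}_\LL$ decomposes into the smooth normal strips together with the circular \emph{fans} subtended at the corners --- exactly the sector that, in the broken-line case, is the region discarded between $\Omega_1$ and $\Omega_2$. These fans carry no part of the singular term, since their intersection with $\LL$ is a single point, so their nonnegative contribution is simply dropped, while the smooth strips cover $\LL$ up to the measure-zero corner set; this is where the condition that the interior be convex (so that the strips are disjoint, as $2\beta<\pi$ ensured before) is indispensable. One must also verify that the foliation exhausts $\mathcal{E}_\LL$ out to infinity, which I would deduce from the existence of the non-parallel asymptotes in assumption (d). Once this coordinate bookkeeping is settled, the analytic estimate is essentially identical to that of Theorem~\ref{main2}.
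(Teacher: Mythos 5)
Your proposal is correct and follows essentially the same route as the paper: reduction via Theorem~\ref{main1} to nonnegativity of the form on a core, discarding the interior contribution, normal coordinates $(s,\tau)$ over the smooth segments of $\LL$ with Jacobian $1+\tau/R(s)\ge 1$ guaranteed by convexity (the paper's regions $\Omega_j$ bounded by perpendiculars at the corner points are exactly your normal strips, and the corner fans are likewise discarded as nonnegative), followed by dropping the tangential term and applying Lemma~\ref{propFunc} fiberwise with $\sqrt{V_0}\ge\alpha$. The only cosmetic difference is that you need not verify that the strips and fans exhaust $\mathcal{E}_\LL$ --- disjointness of the strips suffices, since any remaining exterior region contributes nonnegatively and can simply be dropped, which is precisely how the paper argues.
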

\begin{proof}
In view of Theorem~\ref{main1} it suffices to check that for any $\psi \in \C_0^2 (\R^2)$ we have $\la H\psi , \psi \ra \ge 0$. Neglecting the non-negative integral over $\mathcal{I}_\LL$, we can write the inequality
\beq
\label{eqTh3Eq1}
\la H\psi, \psi \ra \ge I[\psi]:= \int_{\mathcal{E}_\LL} \left(|\nabla \psi |^2 + V_0 |\psi|^2 \right)(x)\, \D x - \alpha \int_\LL {|\psi(\LL(s))|^2}\, \D s\,.
\eeq
Let $\LL_j := (s_j,s_{j+1}) \rightarrow \R^2,\: j=1,2,\dots,n-1$ with $s_1 = -\infty$ and $s_n = \infty$ be the $\C^2$ segments of $\LL$. For $j=2,\dots,n-1$, let $\Gamma_j^1$ and $\Gamma_j^2 $ be the half-lines in $\mathcal{E}_\LL$ emerging from the point $\LL(s_j) $ and orthogonal at this point to the tangents to $\LL_{j-1}$ and $\LL_j$ respectively (see Fig.~1). Denote by $\Omega_1$ the regions with the boundaries $\LL_1$ and $\Gamma^2_2$, by $\Omega_j, \: j=2,\dots,n-1$, the regions with the boundaries $\Gamma_j^2, \LL_j, \Gamma_{j+1}^1$, and by $\Omega_{n-1}$ the region bounded by $\Gamma_{n-1}^2$ and $\LL_{n-1}$.

\begin{figure}
\label{illustration}
\definecolor{qqqqff}{rgb}{0.,0.,1.}
\begin{tikzpicture}[line cap=round,line join=round,x=.7cm,y=.7cm]
\draw [shift={(6.40598984771574,-3.4676751269035613)}] plot[domain=1.8136705445139667:2.3097943466025104,variable=\t]({1.*7.343192279718322*cos(\t r)+0.*7.343192279718322*sin(\t r)},{0.*7.343192279718322*cos(\t r)+1.*7.343192279718322*sin(\t r)});
\draw [shift={(12.273844282238448,-4.029343065693432)}] plot[domain=2.635791642892342:3.0094021834296742,variable=\t]({1.*12.361693189813085*cos(\t r)+0.*12.361693189813085*sin(\t r)},{0.*12.361693189813085*cos(\t r)+1.*12.361693189813085*sin(\t r)});
\draw (1.46,1.96)-- (-0.4708436835698764,4.078886725999075);
\draw (1.46,1.96)-- (-0.9317305985500925,3.284681094119432);
\draw [shift={(3.814170854271357,-4.972562814070351)}] plot[domain=0.9036391554834776:1.4754221412880848,variable=\t]({1.*8.671974084186662*cos(\t r)+0.*8.671974084186662*sin(\t r)},{0.*8.671974084186662*cos(\t r)+1.*8.671974084186662*sin(\t r)});
\draw (4.64,3.66)-- (4.173380056039653,5.543315112250095);
\draw (4.64,3.66)-- (4.829610719656421,5.642040057705814);
\draw [shift={(-5.113421052631559,-4.717631578947365)}] plot[domain=0.11968280059086746:0.43013688237047787,variable=\t]({1.*15.725915468201485*cos(\t r)+0.*15.725915468201485*sin(\t r)},{0.*15.725915468201485*cos(\t r)+1.*15.725915468201485*sin(\t r)});
\draw (9.18,1.84)-- (11.060629730014458,4.227684702173255);
\draw (9.18,1.84)-- (12.328922116487139,3.2846836089614855);
\draw (-0.42,2.68) node[anchor=north west] {$\Gamma^1_2$};
\draw (0.02,-2.4)-- (-0.3550512469390227,-5.220658015256111);
\draw (10.5,-2.84)-- (10.782898164766083,-5.192436021546814);
\draw (0.42,4.1) node[anchor=north west] {$\Gamma^2_2$};
\draw (-0.24,1.66) node[anchor=north west] {$\Omega_1$};
\draw (1.96,4.14) node[anchor=north west] {$\Omega_2$};
\draw (7.28,4.56) node[anchor=north west] {$\Omega_3$};
\draw (10.84,1.02) node[anchor=north west] {$\Omega_4$};
\draw (0.96,0.32) node[anchor=north west] {$\mathcal{L}_1$};
\draw (3.08,2.86) node[anchor=north west] {$\mathcal{L}_2$};
\draw (6.5,2.78) node[anchor=north west] {$\mathcal{L}_3$};
\draw (9.16,-0.14) node[anchor=north west] {$\mathcal{L}_4$};
\begin{scriptsize}
\draw [fill=qqqqff] (1.46,1.96) circle (1.5pt);
\draw [fill=qqqqff] (4.64,3.66) circle (1.5pt);
\draw [fill=qqqqff] (9.18,1.84) circle (1.5pt);
\end{scriptsize}
\end{tikzpicture}
\caption{To the proof of Theorem~\ref{main3}}
\end{figure}
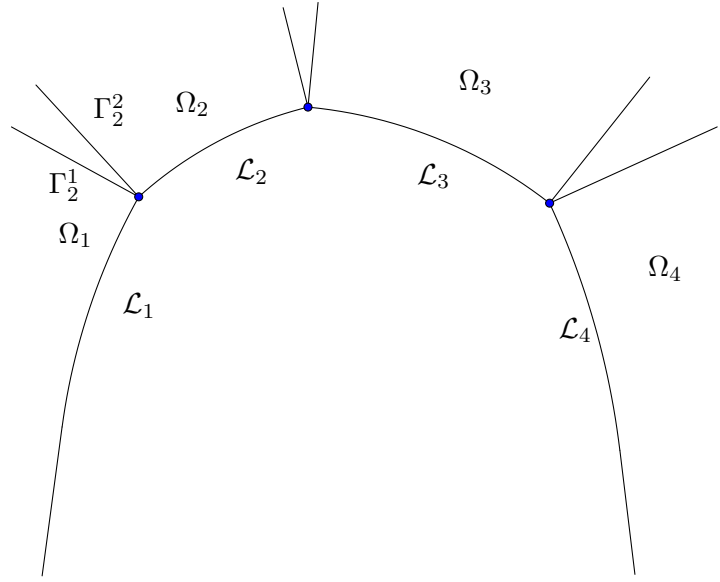

In view of the assumed convexity of $\mathcal{I}_\LL$ the regions $\Omega_j$ do not overlap mutually. This yields
\beq
\label{eqTh3Eq2}
 I[\psi] \ge \sum_{j=1}^{n-1} \int_{\Omega_j} \left(|\nabla \psi |^2 + V_0 |\psi|^2 \right)(x)\, \D x - \alpha \sum_{j=1}^{n-1} \int_{s_j}^{s_{j+1}} {|\psi(\LL(s))|^2}\, \D s\,.
\eeq
We introduce in $\Omega_j$ locally orthogonal coordinates $(s,\tau)$, where $\tau$ is measured in the normal direction to the curve at the point $\LL(s)$. Writing with an abuse of notation $\psi(\LL(s))$ as $\psi(s,0)$, we can we estimate the $j$-th term in (\ref{eqTh3Eq2}) in the following way,
\begin{eqnarray}
\label{eqTh3Eq3}
\lefteqn{\int_{\Omega_j} \left(|\nabla \psi |^2 + V_0 |\psi|^2 \right)(x)\, \D x - \alpha \int_{s_j}^{s_{j+1}} {|\psi(\LL(s))|^2}\, \D s} \\ &&
\quad \ge \int_{s_j}^{s_{j+1}} \left\{
{\int_0^{\infty}{\Big[ \Big(\frac{\partial \psi}{\partial \tau}\Big)^2 +V_0 |\psi|^2 \Big]\Big (1 + \frac{\tau}{R(s)}\Big) \D\tau} }
-\alpha |\psi(s,0)|^2 \right\}\,\D s\,, \nonumber
\end{eqnarray}
by neglecting the non-negative term containing $\big(\frac{\partial \psi}{\partial s}\big)^2$; here $R(s) $ is the curvature radius of $\LL$ at the point $s$. According to Lemma~\ref{propFunc} the expression in the curly bracket on the right-hand side of \ref{eqTh3Eq3} is non-negative for any $s\in\R$, even if the curve is locally straight and $R(s)=\infty$. This implies the non-negativity of $I[\psi]$.
\end{proof}

\section{An interior positive potential}
\setcounter{equation}{0} \label{s:int}

\noindent The results of the previous section say, in particular, that in the critical case a curvature does not give rise to bound states provided the potential bias is located in the exterior region. Let us inspect now the opposite situation.

\subsection{The broken line example revisited}

\noindent Let first $\LL$ be a broken line such as we have considered in Theorem~\ref{main2} now assuming $V(x) = 0$ in $\mathcal{E}_\LL$. We will start with the critical case when $V_0 = \alpha^2$ in $\mathcal{I}_\LL$.

\begin{theorem} \label{main4}
$\,\sigma_\mathrm{disc} (H) \neq \emptyset\,$ holds in the described situation.
\end{theorem}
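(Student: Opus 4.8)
The plan is to argue variationally. By Theorem~\ref{main1} the critical case $V_0=\alpha^2$ has $\inf\sigma_\mathrm{ess}(H)=\mu=0$, so it suffices to exhibit a single trial function $\psi$ in the form domain $H^1(\R^2)$ with
\[
Q[\psi]:=\|\nabla\psi\|^2+\alpha^2\!\int_{\mathcal{I}_\LL}|\psi|^2\,\D x-\alpha\int_\R|\psi(\LL(s))|^2\,\D s<0;
\]
then $\inf\sigma(H)<0=\inf\sigma_\mathrm{ess}(H)$ and the discrete spectrum is non-void.

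The trial function I would use glues together, along the whole curve, the transverse threshold resonance of Lemma~\ref{propChi}(iv). Let $d(x)$ denote the distance of $x$ to $\LL$. I set $\psi_0:=1$ on $\mathcal{E}_\LL$ and $\psi_0:=\eu^{-\alpha d(x)}$ on $\mathcal{I}_\LL$; this is Lipschitz (hence in $H^1_{\mathrm{loc}}$), continuous across $\LL$ with boundary value $1$, and on every normal to a straight piece it reproduces the profile of Lemma~\ref{propChi}(iv) with $\sqrt{V_0}=\alpha$. Since $\psi_0\notin L^2$ — it is constant on the exterior and does not decay along the two tubes flanking the half-lines — I multiply it by a radial cut-off $\chi_R$ equal to $1$ on $\{\rho<R\}$ and decaying logarithmically to $0$ on $\{R<\rho<R^K\}$, and take $\psi:=\chi_R\psi_0$.

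For the key computation, place the vertex at the origin with the half-lines at $\varphi=\pm\beta$. On $\mathcal{E}_\LL$ one has $\nabla\psi_0=0$ and $V=0$, so the exterior contributes only cut-off energy. On $\mathcal{I}_\LL$ one has $|\nabla d|=1$ a.e., whence $|\nabla\psi_0|^2+\alpha^2|\psi_0|^2=2\alpha^2\eu^{-2\alpha d}$. Splitting $\mathcal{I}_\LL$ along the bisector and using, in the upper half, the Cartesian coordinates $(\xi,\tau)$ aligned with the side $\varphi=\beta$ (so that this half is $\{0<\tau<\xi\tan\beta\}$ and the $\delta$-support is $\{\tau=0\}$), the bulk and boundary contributions of that half combine, before cut-off, into
\[
\int_0^\infty\Big(\alpha\big(1-\eu^{-2\alpha\xi\tan\beta}\big)-\alpha\Big)\,\D\xi=-\alpha\int_0^\infty \eu^{-2\alpha\xi\tan\beta}\,\D\xi=-\frac12\cot\beta.
\]
The linearly divergent pieces — $\alpha\int\D\xi$ from the bulk and $-\alpha\int\D\xi$ from the $\delta$-term — cancel, which is precisely the resonance balance of Lemma~\ref{propChi}(iv), and the surviving term is finite, negative, and concentrated near the vertex. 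Adding the symmetric lower half yields $Q[\psi_0]=-\cot\beta$ in the formal (uncut) limit. Reinstating $\chi_R$ changes this by the cut-off energy $\int|\nabla\chi_R|^2|\psi_0|^2$ and the associated cross terms, which are $O(1/\ln K)$, together with an $O(1/R)$ mismatch between the circular cut-off and the coordinate $\xi$; all of these tend to $0$. Hence $Q[\psi]\to-\cot\beta<0$, which proves the claim.

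The main obstacle is the bookkeeping of the two cancelling linear divergences: the bulk integral $\int_{\mathcal{I}_\LL}2\alpha^2\eu^{-2\alpha d}$ and the $\delta$-term $-\alpha\int_\LL 1$ are each infinite, and one must organise the computation — e.g. in the $(\xi,\tau)$ variables above, integrating in $\tau$ first — so that their difference is recognised as the convergent, vertex-localised integral $-\alpha\int_0^\infty \eu^{-2\alpha\xi\tan\beta}\,\D\xi$. One must also confirm that the logarithmic radial cut-off costs vanishing Dirichlet energy, the standard two-dimensional fact that an asymptotically constant bounded function can be truncated at $o(1)$ cost; the crease of $d$ along the bisector, being a null set, is harmless. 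Finally, I would note that the value $-\cot\beta$ degenerates to $0$ as $\beta\to\frac{\pi}{2}$, consistently with the absence of binding for a straight interface and with Theorem~\ref{main2} for the exterior bias.
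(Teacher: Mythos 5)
Your construction is in essence the paper's own: the trial function is the transverse threshold resonance of Lemma~\ref{propChi}(iv) glued across $\LL$ (equal to $1$ in $\mathcal{E}_\LL$, to $\eu^{-\alpha\,\mathrm{dist}(x,\LL)}$ in $\mathcal{I}_\LL$) with a logarithmic truncation at infinity, and the key mechanism is the same vertex-localized cancellation; indeed the paper's region $\Omega_1$ yields $\la H\psi,\psi\ra_{\Omega_1}=-2\alpha\int_0^a \eu^{-2\alpha s\tan\beta}\,\D s=-\cot\beta\,\bigl(1-\eu^{-2\alpha a\tan\beta}\bigr)$, which is exactly your $-\cot\beta$ in the limit. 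The only structural difference is where the cutoff sits: the paper truncates the \emph{boundary values} on $\LL$ and extends them into $\mathcal{I}_\LL$ along normals, $\psi(s,\tau)=\eu^{-\alpha\tau}\psi(\LL(s))$, so that in the orthogonal coordinates $(s,\tau)$ the Dirichlet integral splits with no cross terms; you instead multiply the global profile $\psi_0$ by a radial cutoff $\chi_R$.

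That difference is precisely where your write-up has a genuine gap: the cross term $C=2\int\chi_R\psi_0\,\nabla\chi_R\cdot\nabla\psi_0\,\D x$. The ``standard two-dimensional fact'' you invoke controls only $\int|\nabla\chi_R|^2\psi_0^2$, not $C$, and here the obvious estimates fail because $\nabla\psi_0$ is \emph{not} small on the cutoff annulus: since $|\nabla\psi_0|=\alpha$ on a strip of fixed width along each half-line, one has $\int_{\{R<\rho<R^K\}}|\nabla\psi_0|^2\,\D x\sim\alpha R^K$, so Cauchy--Schwarz gives $|C|\lesssim (\alpha R^K)^{1/2}\bigl((K-1)\ln R\bigr)^{-1/2}\to\infty$ rather than $o(1)$. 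The term does vanish, but only for a geometric reason you must make explicit: $\nabla\chi_R$ is radial while $\nabla\psi_0$ is normal to the nearest side, and these directions become orthogonal exactly where $\eu^{-2\alpha d}$ is not small. Quantitatively, $\hat\rho\cdot\nabla d=\sin(\beta-|\varphi|)$ and $d=\rho\sin(\beta-|\varphi|)$ in $\mathcal{I}_\LL$, so
\begin{equation*}
|C|\;\le\;4\alpha\int_R^{R^K}\!\!\int_0^{\beta}|\chi_R'(\rho)|\,\eu^{-2\alpha\rho\sin(\beta-\varphi)}\sin(\beta-\varphi)\,\rho\,\D\varphi\,\D\rho\;\le\;\frac{1}{\alpha\cos\beta}\int_R^{R^K}\frac{|\chi_R'(\rho)|}{\rho}\,\D\rho\;=\;O\!\left(\frac{1}{\alpha R\,(K-1)\ln R}\right),
\end{equation*}
which does tend to zero. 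Alternatively you can bypass $C$ altogether, either by adopting the paper's normal-extension cutoff (then each region $\Omega_1,\Omega_2,\Omega_3$ is handled by an elementary factorized computation), or by the localization identity $Q[\chi_R\psi_0]=\la H\psi_0,\chi_R^2\psi_0\ra+\||\nabla\chi_R|\psi_0\|^2$, noting that distributionally $H\psi_0$ is the negative measure $-2\alpha\cos\beta\,\eu^{-\alpha d}\delta_{\mathrm{bisector}}$ (the profile solves the equation off the bisector and satisfies the $\delta$-matching on $\LL$), whose pairing with $\chi_R^2\psi_0$ converges to $-\cot\beta$. With any of these repairs your argument is complete and agrees with the paper.
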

\begin{proof}
According to Theorem~\ref{main1} we have $\sigma_\mathrm{ess} (H) = [0, \infty)$. To prove the claim, it thus suffices to find a $\psi \in \C_0^2(\R^2)$ such that $\la H\psi, \psi \ra < 0$ holds. We shall work in a setting similar to that of the proof of Theorem~\ref{main3}, however, using now polar coordinates $(\rho, \varphi)$ instead of the Cartesian ones. The two half-lines separating the regions $\mathcal{I}_\LL$ and $\mathcal{E}_\LL$ are then $\big\{ (\rho, \varphi)| \ 0 \le \rho < \infty,\: \varphi = \pm \beta\big\}$. We define first the trial function $\psi(x)$ in the exterior region, setting for $\big\{ -\pi \le \varphi \le - \beta \big\} \bigcup \{ \beta \le \varphi \le \pi \big\}$ its values to be $\psi (\rho, \varphi) = 1$ if $0 \le \rho < a$, $\,\psi (\rho, \varphi) = \left( \ln\frac ba \right)^{-1} \ln \left( \frac{\rho}b\right)$ for $a \le \rho < b$, and finally $\psi (\rho, \varphi) = 0$ for $\rho \ge b$, where the constants $b > a > 0$ will be chosen later. On the other hand, for $ - \beta < \varphi <  \beta$ we define the function $\psi$ by
$$
\psi(x) = \eu^{-\alpha \cdot \mathrm{dist}(x, L)} \psi(\LL(s(x)))\,,
$$
where $L(s(x))$ is a point on $L$ nearest to $x$. The latter may be not unique, of course, but it is not difficult to see that $\psi(x)$ is correctly defined.
			
Since $V(x) = 0$ holds in $\mathcal{E}_\LL$, the `volume' part of the form needs in this region to estimate the kinetic energy only. It can be computed explicitly,
$$
\int_{\mathcal{E}_\LL} |\nabla \psi|^2(x)\,\D x = (2\pi - 2\beta) \left[ \ln \left(\frac ba \right)\right]^{-2}\cdot
			\int_a^b \frac 1{\rho^2}\, \rho \  \D\rho = (2\pi - 2\beta)\left[ \ln \left(\frac ba \right)\right]^{-1},
$$
hence choosing $\frac ba$ sufficiently large, one is able to make this term arbitrarily small. To estimate the contribution to $\la H\psi, \psi \ra$ from $\mathcal{I}_\LL$ we cut this region into three parts. Let $\Gamma_a^\pm$ and $\Gamma_b^\pm$ be the perpendiculars to $\LL$ in $\mathcal{I}_\LL$ at the points $(a,\pm\beta)$ and $(b,\pm\beta)$,  respectively. By $\Omega_1$ we denote    the part of $\mathcal{I}_\LL$ bounded by the (appropriate part of) $\LL$ and $\Gamma_a^\pm$. Let further $\Omega_2$ be the part of $\mathcal{I}_\LL$ with the boundary consisting of the (appropriate part of) $\LL$ together with $\Gamma_a^\pm$ and $\Gamma_b^\pm$, the rest of $\mathcal{I}_\LL$ we will be called $\Omega_3$. Notice that, by construction of $\psi$ in the exterior region, we have $\psi(x) = 0$ in $\Omega_3$, hence we need to estimate the quadratic form of $H$ in $\Omega_1$ and $\Omega_2$ only.
			
In $\Omega_1$ the gradient $\nabla \psi$ is parallel to $\Gamma_a^\pm$ for $\pm\varphi>0$. Moreover, $\psi(\LL(s(x))) = 1$ holds for $x \in \Omega_1$. If we thus introduce the coordinates $(s,\tau)$, where $\tau$ is the distance to $L$, we get $\psi(x) \equiv \psi(s,\tau) = \eu^{-\alpha \tau}$. Using this observation, we immediately arrive at
\begin{eqnarray*}
\lefteqn{\int_{\Omega_1} \left(|\nabla \psi |^2 + V_0 |\psi|^2 \right)(x)\, \D x = 2\int_0^{a}{\int_0^{s\tan\beta} {2\alpha^2 \eu^{-2\alpha \tau}\, \D\tau}\D s}} \\ && \qquad\qquad < 2\alpha a = \alpha \int_{\LL \cap \mathrm{bd}(\Omega_1)} {|\psi (\LL(s))|^2 \D s}\,. \phantom{AAAAAAAAAAAAAAA}
\end{eqnarray*}
The last inequality shows that $\la H \psi, \psi \ra _{\Omega_1} = \gamma < 0 $ with some $\gamma$ independent of $b\,$; here and below
$\la H \psi, \psi \ra _{\Omega_i}$  are the parts of the quadratic form of $H$ corresponding to $\Omega_i,\:i=1,2.$ To complete the proof of Theorem~\ref{main4} it suffices to demonstrate that $\la H \psi, \psi \ra _{\Omega_2} \to 0$ holds as $b \to \infty.$ Due to the mirror symmetry with respect to the angle exis it is enough to compute the corresponding integrals over $\Omega_2 \cap \{ \varphi > 0\}.$ In the variables $(s,\tau)$ the function $\psi$ can be written as
$$
\psi (s,\tau) = \eu^{-\alpha \tau}\left(\ln {\frac ba} \right)^{-1} \ln {\frac sb} \quad\text{with}\quad a \le s \le b\,, \ 0 \le \tau \le s \tan \beta\,.
$$
Consequently, we have
\begin{eqnarray*}
\lefteqn{\la H \psi, \psi \ra _{\Omega_2} = 2\left(\ln {\frac ba}\right)^{-2}\int_a^{b}{2\alpha^2 \left[ \ln^2 {\frac sb} + s^{-2} \right]
	\int_0^{s\tan\beta}{\eu^{-2\alpha \tau}\, \D\tau} \D s}} \\ &&  \qquad \le 2\alpha \left(\ln {\frac ba}\right)^{-2}\int_a^b {\ln ^2 {\frac sb}\, \D s}  \le  \frac12 \alpha \left(\ln {\frac ba}\right)^{-2}(a^{-1}
-b^{-1}) \rightarrow 0
\end{eqnarray*}
as $b \to \infty $ and $a$ is fixed, which is what we have set out to prove.
\end{proof}

\subsection{Example continued: the supercritical case}

\noindent Next we are going to prove that $\alpha^2$ is indeed the critical potential value from the viewpoint of curvature-induced bound states. At the same time we shall obtain a certain counterpart to Proposition~\ref{smallangle}.

\begin{theorem} \label{main5}
In the same situation as in Theorem~\ref{main4}, let $V_0 > \alpha^2.$ Then $\sigma_\mathrm{disc}(H) = \emptyset$ holds for any $\beta > \frac{\pi}{2} V_0^{-\frac 12}\alpha$.
\end{theorem}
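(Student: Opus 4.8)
The plan is to prove directly that $H\ge 0$ on the core $\C_0^2(\R^2)$; together with Theorem~\ref{main1}, which gives $\sigma_\mathrm{ess}(H)=[0,\infty)$ because $V_0>\alpha^2$, this yields $\sigma_\mathrm{disc}(H)=\emptyset$. Writing out the quadratic form and discarding the non-negative exterior gradient term $\int_{\mathcal{E}_\LL}|\nabla\psi|^2$, it suffices to establish
\[
Q[\psi]:=\int_{\mathcal{I}_\LL}\big(|\nabla\psi|^2+V_0|\psi|^2\big)(x)\,\D x-\alpha\int_\LL|\psi(\LL(s))|^2\,\D s\ \ge\ 0
\]
for all $\psi\in\C_0^2(\R^2)$, where $\mathcal{I}_\LL=\{(\rho,\varphi):\rho>0,\ |\varphi|<\beta\}$ is the open wedge whose boundary $\LL$ carries the $\delta$-interaction. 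Thus the whole matter reduces to a positivity statement for an attractive Robin-type form on a wedge with a constant potential.

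To handle this I would use a ground-state (Jacobi) substitution. Using Cartesian coordinates with the $x$-axis along the bisector, so that $x=\rho\cos\varphi>0$ on $\mathcal{I}_\LL$, I set
\[
\Phi(x,y):=\eu^{-kx}\,,\qquad k:=\frac{\alpha}{\sin\beta}\,,
\]
which is smooth and strictly positive on $\overline{\mathcal{I}_\LL}$. Writing $\psi=\Phi v$ with $v=\psi/\Phi\in\C_0^2$ and integrating by parts (the wedge is Lipschitz and its single corner carries no boundary mass), I arrive at the identity
\[
Q[\psi]=\int_{\mathcal{I}_\LL}\Phi^2|\nabla v|^2\,\D x+\int_{\mathcal{I}_\LL}|v|^2\,\Phi\,(-\Delta\Phi+V_0\Phi)\,\D x+\int_\LL|v|^2\,\Phi\,(\partial_\nu\Phi-\alpha\Phi)\,\D s\,,
\]
with $\nu$ the outward normal. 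Everything then hinges on the two factors multiplying $|v|^2$ being non-negative.

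These I compute explicitly. Since $\Delta\Phi=k^2\Phi$, the bulk factor equals $(V_0-k^2)\Phi$, which is non-negative precisely when $k^2\le V_0$, i.e.\ when $\sin\beta\ge\alpha/\sqrt{V_0}$. On the upper edge $\varphi=\beta$ the outward normal is $(-\sin\beta,\cos\beta)$ and $\nabla\Phi=(-k\Phi,0)$, whence $\partial_\nu\Phi=k\sin\beta\,\Phi=\alpha\Phi$; the lower edge is symmetric, so the boundary factor $\partial_\nu\Phi-\alpha\Phi$ vanishes identically and $\Phi$ realizes the Robin condition exactly. Hence $Q[\psi]\ge 0$ as soon as $\sin\beta\ge\alpha/\sqrt{V_0}$. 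Finally I must check that the stated hypothesis is sufficient: with $c:=\alpha/\sqrt{V_0}\in(0,1)$, the map $c\mapsto\sin(\tfrac{\pi}{2}c)$ is concave on $[0,1]$ and agrees with $c\mapsto c$ at both endpoints, so $\sin(\tfrac{\pi}{2}c)\ge c$; since $\sin$ is increasing on $(0,\tfrac{\pi}{2})$ and $\beta>\tfrac{\pi}{2}c$, one gets $\sin\beta>\sin(\tfrac{\pi}{2}c)\ge c$, which is exactly the condition required above.

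The only genuinely delicate point, and where I expect the main obstacle, is the behavior at the vertex. A fibrewise separation of variables (in the transverse Cartesian coordinate or in the polar angle) together with Lemma~\ref{propFunc} fails here, because the transverse segment inside the wedge has finite length of order $\rho$, so the associated one-dimensional inequality degenerates as $\rho\to0$ and the naive pointwise estimate turns negative near the corner. The virtue of the supersolution $\Phi=\eu^{-kx}$ is that it treats the corner globally: it satisfies the Robin identity $\partial_\nu\Phi=\alpha\Phi$ exactly along both edges, including arbitrarily close to the vertex, so no corner remainder is generated and the Gauss--Green formula receives no contribution from the apex. Accordingly, the step I would justify with the most care is precisely the integration by parts on the Lipschitz wedge and the exact boundary identity on both half-lines.
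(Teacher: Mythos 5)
Your proof is correct, and it takes a genuinely different route from the paper's. The paper argues by contradiction: assuming a negative-energy state $\psi_0$ exists, it discards the exterior gradient term just as you do, but then applies the angular rescaling $\psi(\rho,\varphi):=\psi_0(\rho,2\beta\pi^{-1}\varphi)$, which maps the wedge $\mathcal{I}_\LL$ onto the half-plane; this inflates the kinetic and potential terms by at most the factor $\frac{\pi}{2\beta}$ while leaving the boundary integral unchanged, so the problem becomes a half-plane one with the boosted coupling $\alpha_1=\frac{\pi}{2\beta}\alpha$, and the fibered one-dimensional bound of Lemma~\ref{propFunc} gives non-negativity whenever $\alpha_1\le\sqrt{V_0}$ --- this is exactly the origin of the linear threshold $\beta>\frac{\pi}{2}V_0^{-1/2}\alpha$ in the statement. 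You instead prove non-negativity directly on the wedge via the ground-state substitution $\psi=\Phi v$ with $\Phi=\eu^{-kx}$, $k=\alpha/\sin\beta$, exploiting that this exponential satisfies the Robin identity $\partial_\nu\Phi=\alpha\Phi$ exactly on both edges; your integration by parts is sound (the divergence theorem for the compactly supported $C^1$ field $|v|^2\Phi\nabla\Phi$ on the convex Lipschitz wedge is unproblematic, the vertex contributing nothing), and positivity then reduces to $k^2\le V_0$, i.e. $\sin\beta\ge\alpha V_0^{-1/2}$. Note that this condition is \emph{strictly weaker} than the paper's hypothesis: by Jordan's inequality $\sin\beta\ge\frac{2}{\pi}\beta$ one has $\sin\beta>\alpha V_0^{-1/2}$ whenever $\beta>\frac{\pi}{2}V_0^{-1/2}\alpha$, while the converse fails on the range $\arcsin\bigl(\alpha V_0^{-1/2}\bigr)\le\beta\le\frac{\pi}{2}V_0^{-1/2}\alpha$. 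So your argument actually yields a slightly stronger theorem (absence of discrete spectrum for all $\beta\ge\arcsin\bigl(\alpha V_0^{-1/2}\bigr)$), and it is self-contained in that it never invokes Lemma~\ref{propFunc}; what the paper's rescaling buys in exchange is that it recycles the one-dimensional lemma and avoids any boundary-identity computation, at the price of a non-optimal constant.
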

\begin{proof}
Suppose that our operator $H$ with a fixed coupling parameter $\alpha$ and angle $2\beta $ satisfying $\pi >2\beta > \pi V_0^{-\frac 12}\alpha$  has an eigenvalue. Recall that due to the assumption $V_0 > \alpha^2$ we have $\sigma_\mathrm{ess}(H) = [0,\infty)$, hence there is a $\psi_0\in H^1(\R^2) \cap L^2(\R^2)$ such that
$$
I[\psi_0] := \| \nabla \psi_0\|^2_{\mathcal{I}_\LL} +V_0 \| \psi_0\|^2_{\mathcal{I}_\LL} - \alpha \int_\LL {|\psi_0(\LL(s))|^2\,\D s} < 0\,,
$$
where $\|.\|_{\mathcal{I}_\LL}$ is the $L^2$-norm over the region $\mathcal{I}_\LL$. Let $\Omega_0$ be the right half-plane,
$$
\Omega_0 := \left\{ (\rho, \varphi) |\: \rho \in [0, \infty)\,, \ \varphi \in \big[-\frac{\pi}{2}, \frac{\pi}{2}\big]\right\}\,,
$$
and let $\LL_0$ be its boundary. We define function $\psi:\Omega_0 \to \R^2$ by an angular rescaling of $\psi_0$, namely $\psi(\rho, \varphi):= \psi_0 (\rho, 2\beta \pi ^{-1}\varphi)$. Then we have
$$
\|\nabla \psi\|^2_{\Omega_0} \le \frac12\pi\beta ^{-1}|\nabla \psi_0\|^2_{\mathcal{I}_\LL}\,, \ \ \ \   V_0\| \psi\|^2_{\Omega_0} = \frac12 V_0\pi \beta^{-1}\|\psi_0\|^2_{\mathcal{I}_\LL}\,,
$$
$$
 \int_{\LL_0}{|\psi(\LL_0(s))|^2\,\D s} = \int_{\LL}{|\psi_0(\LL(s))|^2\,\D s}\,.
$$
Using these relations and choosing $\alpha_1 = \frac12\pi \beta ^{-1}\alpha$ we get
$$
\| \nabla \psi\|^2_{\Omega_0} +V_0 \| \psi\|^2_{\Omega_0} - \alpha_1 \int_{L_0} {|\psi(L(s))|^2 ds} \le \frac12\pi \beta^{-1} I[\psi_0]<0\,,
$$
however, the last inequality contradicts Lemma~2, because $\alpha_1 < \sqrt{V_0}$ by assumption.
\end{proof}

\subsection{More general curves}

\noindent Finally, let us generalize the claim of Theorem~\ref{main4} to a wider class of curves similarly as we did it at the end of previous section; we shall again assume that $\LL$ satisfies the assumptions (a)--(d) of Sec.~\ref{s:prelim}. Here we add a restriction which would allow us to simplify further geometrical arguments: we shall suppose that outside a compact, i.e. for all sufficiently large $|s|$ the curve $\LL$ coincides with its asymptotes.

\begin{theorem} \label{main6}
In the described situation, let $V(x) = V_0 =\alpha^2$ hold for $x \in \mathcal{I}_\LL$ and $V(x)=0$ otherwise. Then $\sigma_\mathrm{disc}(H) \neq \emptyset$.
\end{theorem}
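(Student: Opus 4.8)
The plan is to run the variational argument of Theorem~\ref{main4} verbatim in the straight part of the configuration and to absorb the genuinely curved piece into the single region where the energy gain is produced. Since $V_0=\alpha^2$, Theorem~\ref{main1} gives $\sigma_\mathrm{ess}(H)=[0,\infty)$ with threshold $\mu=0$; hence it suffices to produce a $\psi\in\C_0^2(\R^2)$ (or $H^1(\R^2)$) with $\la H\psi,\psi\ra<0$, for then the minimax principle places a discrete eigenvalue below $0$. The extra hypothesis that $\LL$ coincides with its asymptotes outside a compact set is exactly what lets me reuse the wedge geometry at infinity: I first fix $a>0$ so large that the disc $B_a=\{\rho<a\}$ contains the whole arc of $\LL$ on which it differs from the rays $\varphi=\pm\beta$, so that for $\rho\ge a$ the curve is literally a pair of straight half-lines.

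I then copy the trial function of Theorem~\ref{main4}. In $\mathcal{E}_\LL$ I take $\psi$ radial, equal to $1$ on $B_a$, equal to the logarithmic interpolation $\ln(b/\rho)/\ln(b/a)$ for $a\le\rho\le b$ and $0$ for $\rho\ge b$, where $b>a$ will be sent to infinity. In $\mathcal{I}_\LL$ I set $\psi(x)=\eu^{-\alpha\,\mathrm{dist}(x,\LL)}\,\psi(\LL(s(x)))$ with $s(x)$ a nearest curve point. As in the cited proof this is a well-defined Lipschitz function of compact support, hence an admissible trial function. The convexity of $\mathcal{I}_\LL$ guarantees that the collar (normal) coordinates $(s,\tau)$ parametrize the interior up to its cut locus $\tau=\tau_{\max}(s)$, with area element $\big(1-\tau/R(s)\big)\,\D s\,\D\tau$, where $R(s)>0$ is the curvature radius, and that $0\le 1-\tau/R(s)\le 1$ throughout this range.

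Next I split the form exactly as before, writing $w:=\psi\circ\LL$. The exterior kinetic energy equals $(2\pi-2\beta)\,[\ln(b/a)]^{-1}\to 0$ as $b\to\infty$. I partition the interior into $\Omega_1$ (the collar over the arc $\{\rho\le a\}$, where $w\equiv 1$), $\Omega_2$ (the collar over $\{a\le\rho\le b\}$) and $\Omega_3$ (where $\psi\equiv 0$). Because $a$ already absorbs the whole non-straight piece, the curve bounding $\Omega_2$ lies on the asymptotes, so $\la H\psi,\psi\ra_{\Omega_2}$ is \emph{literally} the quantity estimated in the proof of Theorem~\ref{main4} and tends to $0$ as $b\to\infty$. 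On $\Omega_1$ the function reduces to $\psi=\eu^{-\alpha\tau}$ (so $\partial_s\psi=0$), and using $V_0=\alpha^2$ one gets, with $I_1=\{s:\rho(\LL(s))\le a\}$,
\[
\la H\psi,\psi\ra_{\Omega_1}=\int_{I_1}\Big(\int_0^{\tau_{\max}(s)}2\alpha^2\eu^{-2\alpha\tau}\Big(1-\frac{\tau}{R(s)}\Big)\,\D\tau\;-\;\alpha\Big)\,\D s\,.
\]
Since $\int_0^\infty 2\alpha^2\eu^{-2\alpha\tau}\,\D\tau=\alpha$, while here the integrand carries the extra factor $1-\tau/R(s)\le 1$ and, above all, the \emph{finite} upper limit $\tau_{\max}(s)<\infty$ — finiteness being forced by the convexity, which makes the interior collar near the (rounded) vertex genuinely thin, just as the bisector truncation $\tau_{\max}=\rho\tan\beta$ did for the broken line — the inner integral is strictly smaller than $\alpha$ for every $s\in I_1$. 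Hence $\la H\psi,\psi\ra_{\Omega_1}=\gamma<0$ with $\gamma$ depending only on $a$, not on $b$. Collecting the three contributions gives $\la H\psi,\psi\ra=\gamma+o(1)$ as $b\to\infty$, which is negative once $b$ is large enough, proving $\sigma_\mathrm{disc}(H)\neq\emptyset$.

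The only genuinely new ingredient beyond Theorem~\ref{main4} lives in $\Omega_1$, and I expect the verification of the strict gain there to be the main (though modest) obstacle: one must check that over the curved middle arc the normal coordinates stay non-degenerate up to the cut locus, that $1-\tau/R(s)$ never exceeds $1$ (so curvature can only help, never hurt), and that $\tau_{\max}(s)<\infty$ persists, so that the strict inequality producing $\gamma<0$ survives. This is precisely where the convexity of $\mathcal{I}_\LL$ and the simplifying assumption that $\LL$ is eventually straight are used; with them in hand the geometric estimate is routine and the coordinate computation is the one already employed in the proof of Theorem~\ref{main3}.
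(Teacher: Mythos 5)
Your proposal is correct and follows essentially the same route as the paper's own proof: the same trial function (logarithmic cut-off in $\mathcal{E}_\LL$, $\eu^{-\alpha\,\mathrm{dist}(x,\LL)}\psi(\LL(s(x)))$ in $\mathcal{I}_\LL$), the same decomposition into $\Omega_1,\Omega_2,\Omega_3$ with the non-straight arc absorbed into $\Omega_1$, and the same key estimate there, where convexity gives the Jacobian factor $1-\tau/R(s)\le 1$ together with a finite cut-locus bound $\tau_{\max}(s)=T(s)<\infty$ (the paper's property (iv), $\mathrm{dist}(x,l(x))\le R(l(x))$), forcing the inner $\tau$-integral strictly below $\alpha$ and hence a $b$-independent negative contribution $\gamma$. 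The only cosmetic difference is that the paper spells out the geometric facts (i)--(iv) about nearest-point projections that you assert via "convexity guarantees".
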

\begin{proof}
To prove the claim it suffices to find a $\psi \in \C_0^1(\R^2)$ such that $\la H \psi, \psi \ra < 0$.  Let $O$ be the point, where the asymptotes intersect each other. We will choose this point as the origin of the polar system of coordinates. In analogy with the previous considerations assume that the asymptotes in this coordinate system are
$$
	T_\pm := \left\{ (\rho, \varphi) |\: \rho \in [0,\infty)\,,\: \varphi =\pm\frac12 \beta \right\}\,.
$$
Let $b>a>0$ be chosen so large that it holds $(a,\pm\frac{\beta}{2}) \in T_\pm \cap \LL$. We define function $\psi$ in a way similar to the proof of Theorem~\ref{main4}, setting $\psi (\rho, \varphi) = 1$ for $(\rho, \varphi) $ in $\mathcal{E}_\LL$ with $\rho < a$. If the point $(\rho, \varphi) $ belongs to $\mathcal{E}_\LL$, but $a \le \rho \le b$ we set $\psi (\rho, \varphi) = \left(\ln \frac ab \right)^{-1}\ln \frac {\rho}{b}$, while in the remaining part of $\mathcal{E}_\LL$ the function will vanish, $\psi (\rho, \varphi) = 0$. Following the proof of Theorem~\ref{main4} we define $\psi(x)$ in $\mathcal{I}_\LL$ as $\psi(x) = \eu^{-\alpha \cdot \mathrm{dist}(x, \LL)} \psi(\LL(s(x)))$ and check that 
$$
	\|\nabla \psi\|^2_{\mathcal{E}_\LL} + V_0\| \psi\|^2_{\mathcal{E}_\LL} \le 2\pi \left(\ln \frac ba \right)^{-1} \rightarrow 0
$$
holds as $ ba^{-1} \to \infty$. Following the same line of reasoning we define the perpendiculars $\Gamma_a^\pm,\, \Gamma_b^\pm$ and the regions $\Omega_i,\: \ i=1,2,3$, in the same way as before, and set
$$
		I[\psi]_i :=\la H\psi, \psi \ra _{\Omega _i} = \|\nabla \psi\|^2_{\Omega_i} + V_0\| \psi\|^2_{\Omega_i} - \alpha \int_{\LL\cap \mathrm{bd}(\Omega_i)}{|\psi(\LL(s))|^2\,\D s}\,.
$$
It is obvious that $I[\psi]_3 =0$, the estimate of $I[\psi]_2$ is not different from the one given in the proof of Theorem~\ref{main4}. To complete the proof it suffices therefore to show that $I[\psi]_1 \le \gamma < 0 $ for some constant $\gamma$ independent of $b$.
	
Let $\LL_1 = \LL \cap \Omega_1$ and let $\LL_1^i : (s_i, s_{i+1}) \to \R^2\,,\: 0=s_1 < s_2 <\dots<s_n$, be the $\C^2$ segments of $\LL_1$ such that $\bigcup_{i=1}^n \LL_1^{i} = \LL_1$. For $x \in \Omega_1 \setminus \LL_1 $ let $l(x)$ be a point in $\LL$ with the minimum distance to $x$. Again, such a point may be not unique, but this fact paly no role in the estimates below. We are going to use the following simple geometrical properties of $\Omega_1 $ related mainly to the convexity of the set and to the orthogonality of $\Gamma_a^\pm$ to $\LL$.
	\begin{enumerate}[(i)]
\item $\,l(x) \in \LL_1$ for any $x \in \Omega_1 \setminus \LL_1$.
\item Let $ \Gamma_x$ be the straight line segment connecting $x$ with $l(x)$. Then $\Gamma_x$ is orthogonal to the tangent to $\LL_1$ at the point $l(x)$.
\item Let $y \in \Gamma_x$ be a point on $\Gamma_x$, which lies between $x$ and $l(x)$. Then $l(y)=l(x)$.
\item Let $R(l(x))$ be the curvature radius of $\LL_1$ at the point $l(x)$. Then $\mathrm{dist}(x, l(x)) \le R(l(x))$.
	\end{enumerate}
The last property can be easily proved by \emph{reductio ad absurdum}. Indeed, should the inequality $\mathrm{dist}(x, l(x)) > R(l(x))$ hold, there would be a point $l_1(x)$ in a small neighborhood of $l(x)$ such that $\mathrm{dist}(x, l(x)) > \mathrm{dist}(x, l_1(x))$, but this contradicts the definition of $l(x)$.

Notice that property (i) yields $\psi (x) = 1$ for any $x \in \LL_1$. Using this fact we can simplify the expression of $\psi (x)$  for all $x \in \Omega_1$, writing it as $\psi (x)= \eu^{-\alpha \cdot \mathrm{dist} (x, l(x))}$. Obviously, the gradient $\nabla \psi (x)$ is parallel to $\Gamma_x$ which implies
$$
|\nabla \psi (x)| = \alpha |\psi (x)| = \alpha \eu^{-\alpha \cdot \mathrm{dist} (x, l(x))}\,.
$$
For an arbitrary $\LL(s) \in \LL_1$ we set
$$
M(s):= \big\{ x|\: x \in \Omega_1 \setminus  \LL,\,  l(x)=\LL(s)\, \big\}, \quad\text{ad}\quad T(s):= \max_{x \in M(s)} \mathrm{dist} (x, \LL(s))\,.
$$
In a way similar to the proof of Theorem~\ref{main4}, we estimate $I[\psi]_1$ by introducing the coordinates $(s,\tau)$, where $\tau$ is measured in the direction of the normal vector to $\LL$ at the point $\LL(s)$. This gives
\beq
\label{eqLast}
  I[\psi]_1 = \sum_{i=1}^{n}{\int_{s_i}^{s_{i+1}}{\Big[ 2\alpha^2 \int_0^{T(s)}{\big( 1- {\tau}{R(s)}^{-1} \big)\, \eu^{-2\alpha \tau} d\tau} -\alpha \Big]\,\D s}\,.}
\eeq
	The integral in (\ref {eqLast}) is strictly negative and independent of $b$, because the integration over $\tau $  goes up to $T(s)$ and $T(s)$ is a finite number.
\end{proof}

\section{Finiteness of the discrete spectrum}
\setcounter{equation}{0} \label{s:finite}

\noindent In the closing section return to the question about cardinality of the discrete spectrum. We have shown in Proposition~\ref{smallangle} that the number of eigenvalues can be large in suitable geometries, now we are going to show that it will be always nevertheless finite, at least if we add a rather mild assumption on how the curve $\LL$ approaches its asymptotes. Specifically, let $d(s)$ be the distance of the point $\LL(s)$ to the nearest asymptote; we will suppose that $d'(s)=o(|s|^{-1})$ as $|s| \to \infty .$ This hypothesis in indeed not very restrictive, recall that  in view of the conditions (a)--(d) of Sec.~\ref{propL} the function $d'(\cdot)$ is monotone for large $|s|$ and the integral  $\int_{\R} |d'(s)|\,\D s $ converges.

\begin{theorem} \label{fin}
In the described situation we have  $\sharp\sigma_\mathrm{disc} (H) < \infty\,$.
\end{theorem}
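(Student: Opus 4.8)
The plan is to combine a Neumann bracketing in the radial variable with a reduction of the exterior problem to a one-dimensional effective Schr\"odinger operator along the curve, for which finiteness of the discrete spectrum is classical.

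First I would fix a large radius $R$ and impose additional Neumann conditions on the circle $\rho=R$, writing $H\ge H^N_{\mathrm{int}}\oplus H^N_{\mathrm{ext}}$ in the sense of quadratic forms, where the two operators act on the disc $\{\rho<R\}$ and on its complement $\{\rho>R\}$ respectively. Since the disc is bounded, $V$ is bounded, and the $\delta$-interaction on the compact piece of $\LL$ contained in the disc is a relatively form-compact perturbation, the operator $H^N_{\mathrm{int}}$ has compact resolvent and hence only finitely many eigenvalues below $\mu$. Writing $N_\mu(\cdot)$ for the number of eigenvalues below $\mu$, the min-max principle gives $\sharp\sigma_\mathrm{disc}(H)\le N_\mu(H^N_{\mathrm{int}})+N_\mu(H^N_{\mathrm{ext}})$, so it remains to show that $N_\mu(H^N_{\mathrm{ext}})$ is finite for a suitable $R$.

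On $\{\rho>R\}$ I would use, exactly as in the proof of Lemma~\ref{HVZ2}, a partition of unity in the angular variable separating a neighbourhood of each asymptote $\varphi=\pm\beta$ from the remaining sector. On the middle sector the curve is absent once $R$ is large, so the form reduces to that of $-\Delta+V\ge 0\ge\mu$ and contributes nothing. Near each asymptote, since by the added hypothesis $\LL$ eventually coincides with (and in the general version rapidly approaches) its asymptote, I would pass to the locally orthogonal coordinates $(s,\tau)$ adapted to $\LL$; the Jacobian is $1+\tau\kappa(s)$ with $\kappa$ the signed curvature, and the transverse coupling acquires the slope factor $\sqrt{1+l'(s)^2}$, as in the estimate leading to \eqref{HVZeq4}. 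Applying the generalized form of Lemma~\ref{propChi} to the transverse operator $-\partial_\tau^2-\alpha\sqrt{1+l'^2}\,\delta(\tau)+V(\tau)$ at each fixed $s$ shows that its bottom equals a quantity $\mu(s)$ differing from $\mu$ only by $O(l'(s)^2)$ and tending to $\mu$ as $|s|\to\infty$. Keeping the longitudinal kinetic term $|\partial_s\psi|^2$, I would then bound the exterior form from below by $\mu\|\psi\|^2$ plus the form of a half-line Schr\"odinger operator $-\tfrac{d^2}{ds^2}+q(s)$, whose effective potential $q$ collects the curvature terms $\tau\kappa(s)$ from the Jacobian together with the slope correction $\mu(s)-\mu=O(l'(s)^2)$. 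Both are governed by the rate at which $\LL$ flattens onto its asymptotes: with $d'(s)=o(|s|^{-1})$ one gets $l'(s)=o(|s|^{-1})$, and, using the monotonicity of $d'$ and the convergence of $\int_\R|d'|$ recorded before the theorem, the negative part $q_-$ satisfies $\int_\R(1+|s|)\,q_-(s)\,\D s<\infty$.

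By the classical Bargmann--Calogero bound, a one-dimensional half-line Schr\"odinger operator with such a potential has only finitely many eigenvalues below the bottom of its essential spectrum; hence $N_\mu(H^N_{\mathrm{ext}})<\infty$ and the theorem follows. The main obstacle is the transverse reduction itself: the cross terms between $\partial_s$ and $\partial_\tau$ and the $s$-dependence of the transverse profile must be controlled uniformly, and, crucially, in the critical case $V_0=\alpha^2$ the transverse mode is not square-integrable by Lemma~\ref{propChi}(iv), so one cannot simply project onto it and must argue directly at the level of forms, retaining enough longitudinal kinetic energy to dominate the curvature-induced attraction. It is precisely here that the hypothesis $d'(s)=o(|s|^{-1})$ is indispensable: it pushes the effective attractive tail below the borderline $|s|^{-2}$ decay and secures the weighted integrability required by the finiteness criterion.
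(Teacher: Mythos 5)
Your bracketing framework (Neumann cut at $\rho=R$, compact resolvent of the interior operator, reduction to counting eigenvalues of the exterior piece) is sound, and in the subcritical case $\mu<0$ a proof along your lines could probably be completed. But even there your route has an uncontrolled ingredient: an effective operator $-\frac{\D^2}{\D s^2}+q(s)$ only appears after projecting onto the $s$-dependent transverse ground state, and both the projection cross terms and the Jacobian contributions $\tau\kappa(s)$ you put into $q$ involve the \emph{curvature} (equivalently $l''$), which the hypothesis $d'(s)=o(|s|^{-1})$ does not control at all --- under assumptions (a)--(d) one can have $\int(1+|s|)\kappa(s)^2\,\D s=\infty$, so a Bargmann bound applied to such a $q$ proves nothing. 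The paper sidesteps this entirely: near infinity it never uses curve-adapted coordinates, but Cartesian coordinates along the asymptote with $\LL$ written as a graph $y=l(x)$, so that only the slope enters, via the tilted coupling $\alpha\sqrt{1+l'^2(x)}$; the resulting pointwise transverse bound (Lemma~\ref{propChi}) gives an error $O(l'^2)=o(|x|^{-2})$ that is absorbed by the longitudinal kinetic energy through Hardy's inequality --- no projection, no effective one-dimensional operator, no Bargmann.

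The genuine gap is the critical case $V_0=\alpha^2$ (more generally $\mu=0$), which is the case the paper cares most about. You correctly observe that the transverse mode is not square-integrable and that one "must argue directly at the level of forms", but that direct argument \emph{is} the missing proof, and your scheme cannot supply it: on cross-sections of width $\sim|s|$ the transverse operator (with the Neumann conditions your bracketing imposes) has a spectral gap of order $|s|^{-2}$ above its bottom, so there is no uniform gap permitting a mode decomposition and no one-dimensional operator to which a finiteness criterion could be applied. Moreover, if the angular splitting is done by a partition of unity, any localization error $-C\rho^{-2}$ with fixed $C>0$ reaching to infinity in the potential-free exterior sector is by itself fatal when $\mu=0$, since $-\Delta-C\rho^{-2}$ on a two-dimensional conical region has infinitely many eigenvalues below $0$; such errors cannot be "neglected". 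This is precisely what the paper's Zhislin-type refinement of the cutoffs is for: the $O(1)$ part of the localization error is arranged to be supported only where $V_0>0$ supplies a gap, the remaining error is $\eps\rho^{-2}$ with $\eps$ below the Hardy/Poincar\'e constant and is absorbed using that the localized function vanishes for $\rho\le a$, and the transverse direction is then estimated pointwise in the longitudinal variable via Lemma~\ref{propFunc} (not Lemma~\ref{propChi}, whose domain the localized functions do not belong to), after first disposing of the exterior-bias case by Theorem~\ref{main3}. Without an argument of this kind your proposal establishes the theorem only for $\mu<0$.
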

\begin{proof}
A different argument is needed in two different cases. $\mu<0$ and $\mu=0$. Let us start with the subcritical situation, $\mu < 0$, when we will refine the technique developed in the proof of Lemma~\ref{HVZ2}. Specifically, we are going to construct a finite dimensional subspace $M$, such that $\la H\psi, \psi \ra \geq \mu \|\psi\|^2$ holds for any $\psi \in \C^2_0(\R^2)$ such that $\psi \perp M$. The notations will be the same as in the proof of Lemma~\ref{HVZ2}.

Inspired by Ref.~\onlinecite{Zh74} we consider functions $u$ and $w$ defined in the proof of Lemma~\ref{HVZ2} requiring them to satisfy two additional conditions, namely $w(t)<1$ for $t<1+b$ and $\lim_{t \to (1+b)-} w'^2(t)(1-w^2(t))^{-1} = 0$.  The last requirement implies, in particular, that we can find a $b_1 \in (0,b)$ such that $[u'^2(t) + w'^2(t)]w^{-2}(t) \leq \eps$ holds for all
$t \in [1+b_1, 1+b]$. On the other hand, on $[1, 1+b_1]$ we have $u(t) > 0$ and one can find a constant $C_0$ depending on $b_1$ and $\eps$ such that $u^2(t)+w'^2(t) \leq C_0 u^2(t)$ holds on this interval. Applying these observations to functions $U=u(\rho a^{-1})$ and $W= w(\rho a^{-1})$ gives (compare with inequality \eqref{HVZeq2})
\begin{eqnarray}
\label{eqfin0}
\lefteqn{\int \big\{ | U |^2 + |W|^2 \big\} |\psi|^2\, \D \omega = C_0 \int \int_{\rho \in [a,a(1+b) ]} a^{-2}|\psi U|^2\, \rho\, \D \rho\, \D \varphi} \\[.3em] && \qquad\qquad +  \eps \int \int_{\rho \in [a,a(1+b) ]} a^{-2}|\psi W|^2\, \rho\, \D \rho\, \D \varphi\,, \nonumber \phantom{AAAAAAAAAAAAAA}
\end{eqnarray}
This yields
\beq
\label{eqfin1}
 L[\psi] \geq L_1[\psi_0] + L_2[\psi_1] \,,
\eeq
where $L[\psi]$ and $L_1[\psi_0]$ were defined in the proof of Lemma~\ref{HVZ2}, $\psi_1 := \psi W$, and
$$
L_2[\psi_1] := L[\psi_1] -\eps (1+b)^2\|\psi_1 \rho^{-1}\|^2 \,.
$$
Repeating the arguments used in the proof of Lemma~\ref{HVZ2} we infer that there is a finite dimensional subspace $M$ such that
$$
L_1[\psi_0] \geq \mu \|\psi_0\|^2
$$
holds for all $\psi \perp M$. To prove the claim it suffices now to show that $L_2[\psi_1] \geq \mu \|\psi_1\|^2$ holds for a sufficiently large $a>0$ and all $\eps$ small enough.

As in the proof of Lemma~\ref{HVZ2} we introduce next the angular partition determined by the functions $u_1,\, u_2$, and $u_0$ assuming that, in addition to the properties described before, $u_1$ satisfies the condition
\beq
\label{eqfin2}
 \lim_{\varphi \to (\beta +\frac{\gamma}{2})+}\: \frac{u_1'^2(\varphi)}{1-u_1^2(\varphi)} =
\lim_{\varphi \to (\beta -\frac{\gamma}{2})-}\> \frac{u_1'^2(\varphi)}{1-u_1^2(\varphi)} = 0\,.
\eeq
With this additional requirement analogous to inequality \eqref{eqfin0} we can estimate the so-called localization error by $\eps \rho^{-2}$ on the supports of $u_1$ and $u_2$, and by $C(\eps)\rho^{-2}$ on the support of $u_0.$ This yields
\beq
\label{eqfin3}
 L_2[\psi_1] \geq \sum_{i=1}^2 L_3[\psi_1u_i] + L_4[\psi_1u_0] \,,
\eeq
where
\beq
\label{eqfin4}
 L_3[\phi] := \|\nabla \phi\|^2 +\|V^{1/2}\phi\|^2-\alpha \int |\phi(\LL (s))|\,\D s - 2\eps
(1+b)^2\, \big\|\phi \rho^{-1}\chi_{a \leq \rho \leq a(1+b)}\big\|^2
\eeq
and
\beq
\label{eqfin5}
 L_4[\phi] := \|\nabla \phi\|^2 +\|V^{1/2}\phi\|^2 - C\|\phi \rho^{-1}\|^2\,,
\eeq
provided that $a$ is chosen sufficiently large to ensure that the curve does not intersect the support of $\psi u_1 .$ Notice that the constant $C$ in \eqref{eqfin5} depends on $\eps$, but is independent of $a$, and that the function $\psi_1 u_0$ vanishes for $\rho < a$. Consequently, for all large enough $a$ we have
\beq
\label{eqfin6}
 L_3[\psi_1 u_0] \geq \mu \|\psi_1 u_0\|^2\,.
\eeq
To complete the proof in the subcritical case it suffices now to show that
$$
 L_4[\psi_1 u_i] \geq \mu \|\psi_1 u_i\|^2\,,\ \ \ i=1,2 \,.
$$
We will prove this inequality for $i=1$, the proof for $i=2$ is similar.

Denote the product $\psi_1 u_1$ by $\psi_3 .$ This function is supported in the truncated wedge $\{(\rho, \varphi)|\: \rho \geq a\,,\;  \beta -\gamma \leq \varphi \leq \beta +\gamma \}.$ In analogy with the proof of Lemma~\ref{HVZ2} we introduce the Cartesian coordinate system with the origin at the asymptote crossing, $\rho = 0$, and the $x$-axis coinciding with the asymptote $\varphi = \beta .$ Then we get
\begin{eqnarray}
\label{eqfin7}
\lefteqn{L_3[\psi_3] \geq \int_{a\cos \gamma}^{\infty}\bigg\{ \int \Big\{\Big|\frac{\partial \psi_3}{\partial y}\Big|^2 +V(y)|\psi_3|^2 \Big\}\, \D y - \alpha \sqrt{1+l'^2(x)}\,|\psi_3(x,l(x))|^2 \bigg\}\,\D x } \\[.3em] &&
\qquad\qquad +
\int\bigg\{ \int_{a\cos \gamma}^{\infty}\Big|\frac{\partial \psi_3}{\partial x}\Big|^2 - \frac{2\eps (1+b)^2}{|x|^2}|\psi_3|^2   \bigg\}\, \D x \D y\,. \nonumber \phantom{AAAAAAAAAAAAAAA}
\end{eqnarray}
The first term on the right-hand side of \eqref{eqfin7} can be estimated using Lemma~\ref{propChi}(iii),
\begin{eqnarray}
\label{eqfin7.1}
\lefteqn{\int_{a\cos \gamma}^{\infty}\bigg\{ \int \Big\{\Big|\frac{\partial \psi_3}{\partial y}\Big|^2 +V(y)|\psi_3|^2 \Big\}\, \D y - \alpha \sqrt{1+l'^2(x)}\,|\psi_3(x,l(x))|^2 \bigg\}\,\D x } \\[.3em] &&
\geq
 -\int_{a\cos \gamma}^{\infty}\bigg\{\frac14\big[\alpha^2(1+l'^2(x))-V_0\big]^2\alpha^{-2}\, [1+l'^2(x)]^{-1}\int |\psi_3(x,y)|^2\, \D y\bigg\}\, \D x \,.\nonumber \phantom{AAAAA}
\end{eqnarray}
Recall that $l'(x) = o (|x|^{-1})$ holds by assumption as $|x| \to \infty .$ This implies that
$$
-\frac14 \Big[\alpha^2(1+l'^2(x))-V_0\Big]^2\,\alpha^{-2}\, [1+l'^2(x)]^{-1} = -\frac14(\alpha^2 -V_0)^2\alpha^{-2}+
o(|x|^{-2}) = \mu +o(|x|^{-2})\,.
$$
Substituting form here into the above estimate we arrive for a large enough $a$ at
$$
L_3[\psi_3] \geq \mu \|\psi_3\|^2 + \int\bigg\{ \int_{a\cos \gamma}^{\infty} \Big[
\Big|\frac{\partial \psi_3}{\partial x}\Big|^2 - \frac{3\eps (1+b)}{|x|^2} |\psi_3|^2\Big] \D x\bigg\}\, \D y \,.
$$
Due to the Hardy inequality the inner integral is positive, which yields
$$
L_3[\psi_3] \geq \mu\|\psi_3\|^2
$$
completing thus the proof in the subcritical case, $\mu < 0 .$

Let us turn now to the opposite situation with $V_0 \geq \alpha^2$ where, as we know, $\mu=0$. In view of Theorem~\ref{main3} the discrete spectrum is empty if $V(x) = 0$ in $\mathcal{I}_\LL$, hence we only need to consider the case when the potential vanishes in $\mathcal{E}_\LL$. Similarly to the case $\mu < 0$ one can see that the statement of the theorem is true if
$$
L_2[\psi] \geq \mu \|\psi_1\|^2 = 0
$$
holds for all sufficiently large $a>0$ and all $\psi_1 \in \C_0^2(\R^2)$ satisfying $\psi_1 = 0$ for $\rho \leq a .$

Let functions $u^*,\,w^*:\:[-\pi, \pi] \to [0, 1 ]$ obey the following conditions:
 \begin{enumerate}[(a)]
\item
$ (u^*)^2 +(w^*)^2 = 1 \,,$
\item
$ u^*,\,w^* \in \C^1([-\pi, \pi])\,,$
\item
$u^*(\varphi) = 0 $ for $\varphi \in [-\pi, -\gamma] \cup [\gamma, \pi] $ and $u^*(\varphi) = 1$ for
$\varphi \in [-\frac{\gamma}{2}, \frac{\gamma}{2}] \,,$
\item
$\lim _{|\varphi| \to \gamma-} \big((w^*)'\big)^2\, \Red{\big(1-(w^*)^2\big)^{-1}} = 0\,$.
\end{enumerate}
Following the same line of arguments as in the case $\mu <  0$ we arrive at the inequality
\beq
\label{mu01}
L_2[\psi_1] \geq L_3[\psi_1w^*] + L_4[\psi_1 u^*] \,,
\eeq
where $L_3[\phi]$ and $L_4[\phi]$ are defined in \eqref{eqfin4} and \eqref{eqfin5}. For a fixed constant $C$ in \eqref{eqfin5} and all sufficiently large $a$ we have $C\rho^{-2} \leq Ca^{-2}
< V_0$ on the support of $\psi_1$, which yields $L_4[\psi_1u^*] \geq 0 .$
Denote next $\psi_4 := \psi_1w^*\,\chi_{\varphi \in [-\pi, -\beta] \cup [\beta,\pi]}$, $\:\psi_5 := \psi_1w^*\,\chi_{\varphi \in  [-\beta, -\frac{\gamma}{2}]}$, and $\psi_6:= \psi_1w^*\,\chi_{\varphi \in  [\frac{\gamma}{2}, \beta ]}.$ Obviously, one has
$$
L_3[\psi_1w^*] \geq L_3[\psi_5] + L_3[\psi_6] +L_5[\psi_4]\,,
$$
where
$$
L_5[\phi] = \int \int \bigg\{ \Big|\frac{\partial \phi}{\partial \rho}\Big|^2 -\eps a^{-2}
\chi_{a \leq \rho \leq a(1+b)}\,|\phi|^2 \bigg\}\, \rho\, \D \rho\, \D \varphi
$$
and the integrals in $L_3$ and $L_5$ should be taken over the supports of the corresponding functions. Let us estimate $L_5[\psi_4]$. For fixed $\varphi$ we have
\begin{eqnarray}
\label{mu04}
\lefteqn{\int \bigg\{ \Big|\frac{\partial \phi}{\partial \rho}\Big|^2 -\eps a^{-2}\chi_{a \leq \rho \leq a(1+b)}\,|\phi|^2 \Big\}\, \rho\, \D \rho} \\[.3em] && \qquad \ge \int_a^{a(1+b)} \bigg\{ a\Big|\frac{\partial \phi}{\partial \rho}\Big|^2 -\eps a^{-2}a(1+b)\, |\phi|^2 \bigg\}\,\D \rho \nonumber \\[.3em] &&
\qquad \geq a \int_a^{a(1+b)}\bigg\{ \Big|\frac{\partial \phi}{\partial \rho}\Big|^2 -\eps a^{-2}(1+b)\,|\phi|^2 \bigg\}\,\D \rho
\geq 0 \nonumber
\end{eqnarray}
for any $\eps <\frac{\pi ^2}{b^2(1+b)}\,$, where in the last step we have used the inequality
$$
\int_0^d |f'|^2\, \D x \geq \frac{\pi^2}{d^2} \int _0^d |f|^2\, \D x\,,
$$
which holds for all $f \in \C^1(\R_+)$ with $f(0) = 0 .$ The inequality \eqref{mu04} yields $L_4[\psi_4] \geq 0$ for all sufficiently small $\eps$.

To complete the proof we need to show that $L_3[\psi_5] \geq 0$ and $L_3[\psi_6] \geq 0 .$ The estimates of  $L_3[\psi_5]$ repeat the procedure we used for the term $L_3[\psi_3]$ in the case $\mu < 0$ with the following modification. To assess the braced expression on the left-hand side of \eqref{eqfin7.1} for fixed a $x$ we cannot use Lemma~\ref{propChi}, because the function $\psi_5$ is not zero at $\varphi = \beta$, and consequently it is not in the in the domain of $h$. Instead, we apply Lemma \ref{propFunc} which yields
\begin{eqnarray*}
\lefteqn{ \int_0^{x\tan {(\beta-\frac{\gamma}{2})}} \Big\{\Big|\frac{\partial \psi_5}{\partial y}\Big|^2 
+V(y)|\psi_5|^2 \Big\}\, \D y - \alpha \sqrt{1+l'^2(x)}\,|\psi_5(x,l(x))|^2} \\ && \qquad \geq
 \int_{l(x)}^{x\tan {(\beta-\frac{\gamma}{2})}} \Big\{\Big|\frac{\partial \psi_5}{\partial y}\Big|^2 +
(1+l'^2(x))V_0|\psi_5|^2 \Big\}\, \D y \\ && \qquad - \alpha \sqrt{1+l'^2(x)}\,|\psi_5(x,l(x))|^2 -
V_0 l'^2(x)\int |\psi_5|^2 \D y \\ && \qquad \geq
-V_0 l'^2(x)\int |\psi_5|^2 \D y \geq -\int \frac{\epsilon}{|x|^2}|\psi_5|^2(x,y) \D y\,. \phantom{AAAAAAAA}
\end{eqnarray*}
Similar estimates can be used for $L_3[\psi_6]$.

\end{proof}
\subsection*{Acknowledgments}

\noindent The research was supported by the Czech Science Foundation (GA\v{C}R) within the project 14-06818S.

\subsection*{References}

\end{document}